\documentclass{article}
\usepackage{macros} 
\DeclareMathOperator{\agr}{agr}

\title{Near-Optimal List-Recovery of Linear Code Families}

\author{Ray Li\thanks{Math \& CS Department, Santa Clara University. Email: \url{rli6@scu.edu}. Research supported by NSF grant CCF-2347371.} , Nikhil Shagrithaya\thanks{Computer Science and Engineering Department, University of Michigan, Ann Arbor. Email: \url{nshagri@umich.edu}. Research supported in part by NSF grants CCF-2236931 and CCF-2107345.}}
\date{}

\begin{document}    

\maketitle

\begin{abstract}

We prove several results on linear codes achieving list-recovery capacity.
We show that random linear codes achieve list-recovery capacity with constant output list size (independent of the alphabet size and length).
That is, over alphabets of size at least $\ell^{\Omega(1/\eps)}$, random linear codes of rate $R$ are $(1-R-\eps, \ell, (\ell/\varepsilon)^{O(\ell/\eps)})$-list-recoverable for all $R\in(0,1)$ and $\ell$.
Together with a result of Levi, Mosheiff, and Shagrithaya, this implies that randomly punctured Reed--Solomon codes also achieve list-recovery capacity. 
We also prove that our output list size is near-optimal among \emph{all} linear codes: all $(1-R-\eps, \ell, L)$-list-recoverable linear codes must have $L\ge \ell^{\Omega(R/\eps)}$.

Our simple upper bound combines the Zyablov-Pinsker argument with recent bounds from Kopparty, Ron-Zewi, Saraf, Wootters, and Tamo on the maximum intersection of a ``list-recovery ball" and a low-dimensional subspace with large distance.
Our lower bound is inspired by a recent lower bound of Chen and Zhang.
\end{abstract}

\section{Introduction}
In this work, we study list-recovery for random linear codes and random Reed--Solomon codes, proving near-optimal upper and lower bounds. 

An \deffont{(error correcting) code} $\calC$ is a subset of $\Sigma^n$ for an alphabet $\Sigma$, which, in this work, is always $\F_q$ for some prime power $q$.
We study \emph{linear} codes, which are subspaces of $\F_q^n$. 
We want codes to be large, meaning they have large \emph{rate} $R=\inparen{\log_q |\calC|}/n$.
We also want codes to tolerate more errors.
In the standard unique decoding setting, tolerating many errors means that, for any vector $z\in\F_q^n$, there is at most one \emph{codeword} $c\in\calC$ that agrees with $z$ on many coordinates.

We study a generalization of the unique-decoding problem known as list-recovery.
In list-recovery, we want that, for any $\ell\times\ell\times\cdots\times \ell$ combinatorial rectangle, there are few codewords $c$ that ``agree'' with this rectangle on many coordinates.
Formally, a code $\calC$ is \deffont{$\inparen{\rho, \ell, L}$-list-recoverable} if for any sets $S_1,\dots,S_n\subset \mathbb{F}_q$ of size $|S_i|=\ell$, there are at most $L$ codewords $c_1,\ldots, c_L \in \calC$ such that $c_i \in  S_i$ for at least a $(1-\rho) n$ fraction of the coordinates. The special case of $(\rho, 1,1)$ list-recoverability is the standard unique-decoding setting, and the special case of $(\rho, 1, L)$ list-recovery is known as \deffont{list-decoding}.

List-recovery has motivations in coding theory, complexity theory, and algorithms.
In coding theory, list-recovery has been used as a tool to obtain efficient list-decoding algorithms \cite{guruswami1999improved,guruswami2008explicit,guruswami2013linearalgebraic,kopparty2015listdecoding,hemenway2019local}.
Also, list-recoverable random-linear codes --- which we study in this work --- are used as a building block in other coding constructions \cite{guruswami2001expanderbased,hemenway2018lineartime}.
In complexity theory, list-recoverable codes find applications in constructions of other pseudorandom objects such as extractors \cite{TZ04} and condensers \cite{GUV09}.
In algorithms, they are also useful primitives in group testing \cite{indyk2010efficiently,ngo2011efficiently} sparse recovery \cite{gilbert2013ell2}, and streaming algorithms \cite{larsen2019heavy,doron2020highprobability}. 

The list-recovery capacity theorem states that $\rho=1-R$ is the optimal tradeoff between the error radius $\rho$ and the code rate $R$. (see e.g., \cite{guruswami2019essential, resch2020list}).
That is, below capacity $\rho<1-R$, there exist $(p,\ell,O_\ell(1))$-list-recoverable codes of rate $R$, and above capacity $\rho>1-R$, any $(\rho,\ell,L)$-list-recoverable code must have exponential list size $L\ge q^{\Omega(n)}$.
The existence holds because uniformly random codes of rate $R$ (over sufficiently large alphabets $q\ge \ell^{\Omega(1/\varepsilon)}$) are $(1-R-\varepsilon,\ell,O(\ell/\varepsilon))$-list-recoverable with high probability.

We wish to understand what kinds of codes achieve list-recovery capacity.
A number of explicit code constructions are known to achieve list-recovery capacity, including Folded Reed--Solomon codes, Multiplicity codes, Folded Algebraic--Geometry codes. Additional techniques --- subspace evasive sets, subspace designs, and expander techniques --- can be used to improve the output list-size $L$ and alphabet size $q$ \cite{guruswami2008explicit,kopparty2015listdecoding,guruswami2013linearalgebraic,guruswami2012folded,guruswami2013list,hemenway2018lineartime,hemenway2019local,guruswami2016explicit,dvir2012subspace,kopparty2018improved,tamo2023tighter} (see Table 1 in \cite{kopparty2018improved}, see also \cite{srivastava2024improved, chen2024explicit} for even tighter list size bounds in the special case of list-decoding).

Still, several fundamental questions remain open. 
\begin{enumerate}
    \item  First, how list-recoverable is a random linear code? A random linear code is a random subspace of $\mathbb{F}_q^n$. All explicit constructions are based on linear codes (though many are only linear over a subfield), so it is natural to wonder about list-recovery of a ``typical'' linear code. As list-recovery is a pseudorandom property, this question also addresses the deeper geometric question of ``how similar is a random subspace to a random set over $\mathbb{F}_q^n$?'', which is well-studied in the more specific context of list-decoding \cite{zyablov1981list, elias1991errorcorrecting,guruswami2011listdecodability, CGV13, wootters2013list, rudra2014every,RW15, rudra2018averageradius,li2020improved,guruswami2021bounds,AGL24}.
\item Second, how list-recoverable are Reed--Solomon codes? The above constructions all generalize the Reed--Solomon code, the most fundamental polynomial evaluation code. Can Reed--Solomon codes themselves achieve list-recovery capacity? Given recent progress that showed the special case that Reed--Solomon codes achieve list-\emph{decoding} capacity \cite{brakensiek2023generic, guo2023randomly, AGL24}, this general case of list-recovery has been an obvious and tantalizing open question.
\item Lastly, is there a fundamental separation between linear and nonlinear codes for list-recovery? On one hand, there is no apparent separation for the special case of list-decoding, where random linear codes are list-decodable to capacity with list-size $O(1/\varepsilon)$ \cite{guruswami2011listdecodability,wootters2013list,li2020improved,guruswami2021bounds}, just like uniformly random codes. On the other hand, uniformly random codes are list-recoverable with list size $O(\ell/\varepsilon)$, but all known linear constructions require output list size at least $\ell^{\Omega(1/\varepsilon)}$, and this lower bound has been proven in various specific settings \cite{guruswami2021bounds,levi2024random, chen2024explicit}.
\end{enumerate}

We answer all three questions.
We show that random linear codes are list-recoverable to capacity with provably near-optimal output list size.
By a recent result of \cite{levi2024random}, this implies that randomly punctured Reed--Solomon codes are list-recoverable to capacity with near-optimal output list size.
Lastly, we prove a fundamental separation between linear and non-linear codes by showing that \emph{all} linear codes of rate $R$ must have list-size at least $L\ge\ell^{\Omega(R/\varepsilon)}$.

\subsection{Our results}
We now state our results in the context of prior work.

\begin{table}[]
    \centering
    \begin{tabular}{c|c|c|c}
        Citation & Radius $\rho$ & input list size &output list size \\[0.5ex] 
        \hline & & &\\[-1.8ex]
        \cite{zyablov1981list,Gur04} & $1-R-\varepsilon$ & $\ell$ & $q^{O(\ell/\varepsilon)}$ \\
        \cite{rudra2018averageradius} & $1-R-\varepsilon$ & $\ell$ & $q^{O(\log^2(\ell/\varepsilon))}$ \\
        This work & $1-R-\varepsilon$ & $\ell$ & $\ell^{O(\ell/\varepsilon)}$ \\
    \end{tabular}
    \caption{List-recovery of Random Linear codes}
    \label{tab:rlc}
\end{table}

\paragraph{List recovery for Random Linear Codes.} 
Several known arguments show that random linear codes achieve list-recovery capacity.
A random linear code is a code generated by a uniformly random generator matrix $\mathbf{G} \in\F_q^{n\times k}$.
First, the Zyablov-Pinsker argument \cite{zyablov1981list} adapted to list-recovery shows that random linear codes of rate $R$ over alphabet $q\ge \ell^{\Omega(1/\varepsilon)}$ are $(1-R-\varepsilon, \ell, q^{O(\ell/\varepsilon)})$-list-recoverable (see, for example \cite[Lemma 9.6]{Gur04}).
Rudra and Wootters \cite{rudra2018averageradius} improved the output list size to $q^{O(\log^2(\ell/\varepsilon))}$, showing random linear codes of rate $R$ over alphabet $q\ge \ell^{\Omega(1/\varepsilon)}$ are $(1-R-\varepsilon, \ell, q^{O(\log^2(\ell/\varepsilon))})$-list-recoverable. 
We improve the output list size to be independent of the alphabet size $q$.
\begin{theorem}[Theorem~\ref{thm:list-rec-rlcs}, Informal]
    For all $R,\varepsilon\in(0,1)$, and $q\ge \ell^{\Omega(1/\varepsilon)}$ a random linear code of rate $R$ is $\inparen{1-R-\eps, \ell, \inparen{\frac{\ell}{\eps}}^{O(\ell/\eps)}}$-list-recoverable with high probability.
    \label{thm:main-1}
\end{theorem}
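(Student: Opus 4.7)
My plan is to run the Zyablov--Pinsker (ZP) probabilistic argument with the Kopparty--Ron-Zewi--Saraf--Wootters--Tamo (KRSW-T) intersection bound as the combinatorial core. The KRSW-T bound I need asserts: for any subspace $V \subseteq \mathbb{F}_q^n$ of dimension at most $O(\ell/\varepsilon)$ with minimum distance at least $(1-R-\varepsilon/2)n$, the intersection with any list-recovery ball $B(S_1,\dots,S_n;1-R-\varepsilon)$ has size at most $(\ell/\varepsilon)^{O(\ell/\varepsilon)}$, uniformly in $q$ and $n$. This $(q,n)$-independence is precisely the ingredient that upgrades the Rudra--Wootters $q^{O(\log^2(\ell/\varepsilon))}$ bound to a $q$-free output list size.

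The proof has three steps. First, a standard first-moment calculation shows that a random linear code $\calC$ of rate $R$ over $\mathbb{F}_q$ with $q \geq \ell^{\Omega(1/\varepsilon)}$ satisfies $d_{\min}(\calC) \geq (1-R-\varepsilon/2)n$ except with probability $q^{-\Omega(n)}$; consequently every subspace $V \subseteq \calC$ inherits this distance bound. Second, suppose for contradiction that $\calC$ is not $(1-R-\varepsilon,\ell,L)$-list-recoverable for $L := (\ell/\varepsilon)^{C\ell/\varepsilon}$ with $C$ a large constant. The ZP union bound gives
\[ \Pr[\text{fail}] \leq \sum_{(S_i)} \sum_V \Pr[V \subseteq \calC]\cdot \binom{|V\cap B|}{L+1}, \]
where $V$ ranges over subspaces of $\mathbb{F}_q^n$ and $\Pr[V \subseteq \calC] = q^{-\dim V \cdot (n-k)}$ for a uniformly random linear code. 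Third, I restrict to the KRSW-T sweet spot $\log_q(L+1) \leq \dim V \leq O(\ell/\varepsilon)$. The alphabet hypothesis $q \geq \ell^{\Omega(1/\varepsilon)}$ ensures this range is non-empty, with $\log_q(L+1) = O(\ell/\varepsilon)$. Within this range, KRSW-T gives $|V\cap B| \leq (\ell/\varepsilon)^{O(\ell/\varepsilon)} < L+1$, which makes the binomial coefficient zero; meanwhile subspaces with $\dim V < \log_q(L+1)$ satisfy $|V\cap B| \leq |V| < L+1$ trivially.

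The main obstacle is the higher-dimensional case $\dim V \gg \ell/\varepsilon$, where KRSW-T does not apply. Here I would rely on the heavy probability penalty $q^{-\dim V \cdot (n-k)}$ in the ZP bound, combined with the intrinsic ball-size bound $|V\cap B| \leq |B| \leq q^{(1-R-\varepsilon)n}\ell^{(R+\varepsilon)n}$, and absorb the net contribution into the $q^{-\Omega(n)}$ distance-failure probability from Step~1. The quantitative heart of the argument is calibrating the alphabet hypothesis $q \geq \ell^{\Omega(1/\varepsilon)}$ so that: (a) the KRSW-T sweet spot comfortably contains $\log_q L$, (b) the higher-dimensional ZP contributions are swamped by the distance-failure event, and (c) the $\varepsilon/2$ slack between the random linear code distance and the list-recovery radius $1-R-\varepsilon$ is sufficient for KRSW-T to yield the claimed $(\ell/\varepsilon)^{O(\ell/\varepsilon)}$ intersection bound.
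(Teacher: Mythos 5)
Your proposal follows the same route as the paper: the Zyablov--Pinsker first-moment argument, combined with the Tamo/KRSW lemma stating that a subspace of dimension $r$ with relative distance $\delta$ meets any $(\delta-\varepsilon/2)$-radius $\ell$-list-recovery ball in at most $(2\ell/\varepsilon)^r$ points, plus a Gilbert--Varshamov bound to guarantee the distance. Your ``sweet spot'' analysis for $\dim V = O(\ell/\varepsilon)$ is exactly the paper's computation, and the calibration issues (a)--(c) you list are resolved there with $q \ge \max(\ell^{8R/\varepsilon+6}, \ell\cdot 2^{4/\varepsilon})$.

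The one genuine gap is your treatment of what you call the main obstacle, the subspaces with $\dim V \gg \ell/\varepsilon$ (which can have dimension as large as $L+1 = (\ell/\varepsilon)^{\Theta(\ell/\varepsilon)}$, since $L+1$ codewords in a ball need not span a low-dimensional space). Your plan to ``absorb'' this contribution into the distance-failure probability does not work as described: the event that the code has a low-weight codeword is unrelated to the event that some ball contains many codewords spanning a high-dimensional space, and the raw union bound over all such subspaces $V$ together with the factor $\binom{|V\cap B|}{L+1}$ (which can be as large as roughly $q^{\dim V\cdot(L+1)}$) is not obviously dominated by $q^{-\dim V(n-k)}$ without a careful count of which subspaces can actually have $|V\cap B|\ge L+1$. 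The clean resolution --- and the one the paper uses --- is that the high-dimensional case is not a separate case at all: if some ball contains codewords spanning dimension at least $2\ell/\varepsilon$, then in particular it contains $2\ell/\varepsilon$ linearly independent codewords, so it suffices to show that with high probability \emph{no} ball contains $2\ell/\varepsilon$ linearly independent codewords. That single event is bounded by a union bound over the $q^{n\ell}$ input lists and the at most $q^{Rn\cdot 2\ell/\varepsilon}$ choices of $2\ell/\varepsilon$ linearly independent messages, each contributing $(|B|/q^n)^{2\ell/\varepsilon}$ --- i.e., exactly your sweet-spot computation at the single dimension $2\ell/\varepsilon$. Once this holds, every output list lies in a subspace of $\mathcal{C}$ of dimension less than $2\ell/\varepsilon$ with the inherited distance, and the Tamo/KRSW lemma finishes the proof. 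With that replacement your argument is complete and matches the paper's.
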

Our list size improves on the prior bounds when $q\ge \ell^{\Omega(\ell/\varepsilon)}$, which covers most alphabet sizes ($q\ge \ell^{\Omega(1/\varepsilon)}$ is needed to achieve list-recovery capacity), and the improvement is more significant when $q$ is larger.
This improvement to an alphabet-independent output list size is critical for Theorem~\ref{thm:main-2} below (see Remark~\ref{rem:main-2}).
As we show in Theorem~\ref{thm:lb}, this output list size is near optimal among all linear codes.

Our proof is simple, combining the Zyablov--Pinsker \cite{zyablov1981list} argument with recent analyses of the list-recovery of explicit constructions like Folded Reed--Solomon codes.
In particular, the Zyablov--Pinsker argument \cite{zyablov1981list} shows that a random linear code can be list-recovered so that, with high probability the output list always lies in a subspace of dimension at most $O(\ell/\varepsilon)$.
Naively, this implies an output list size bound of $q^{O(\ell/\varepsilon)}$.
However, recent analyses of list-recovering explicit codes \cite{kopparty2018improved,tamo2023tighter} showed that subspaces of dimension $D$ with good distance --- random linear codes are well known to have good distance with high probability --- can have at most $(\ell/\varepsilon)^{O(D)}$ points inside an $\ell$-list-recovery ball, thus giving our improved output list size.
We also show that we get the best possible output list size for our proof technique, in the sense that, for any linear code, there are output lists that span a subspace of dimension at least  $\Omega(\ell/\varepsilon)$ (see Proposition~\ref{pr:lb}).

\paragraph{List recovery for Random Reed-Solomon Codes.} 
Reed--Solomon codes \cite{RS60} are the most fundamental evaluation codes. A Reed--Solomon code is given by $n$ evaluation points $\alpha_1, \alpha_2, \ldots, \alpha_n$ in a finite field $\mathbb{F}_q$, and a degree $k<n$, and is defined as
\[
\text{RS}_{n,k}(\alpha_1, \ldots, \alpha_n) := \left \lbrace \left( f(\alpha_1), \ldots, f(\alpha_n) \right) \mid f\in \mathbb{F}_q[x],\text{ }\deg f < k \right \rbrace.
\]	

List-decoding and list-recovery of Reed--Solomon codes are well-studied questions.
The seminal Guruswami--Sudan \cite{guruswami1999improved} algorithm showed that Reed--Solomon codes are list-decodable and list-recoverable up to the \emph{Johnson radius} $1-\sqrt{R\ell}$ \cite{johnson1962new, guruswami2001extensions}.
Since then, there has been much interest in determining whether Reed--Solomon codes are list-decodable and list-recoverable beyond the Johnson bound, and perhaps even up to capacity $\rho=1-R$ (the capacity is $1-R$ for both list-decoding and list-recovery).
Initially, there was evidence against this possibility \cite{guruswami2006limits,cheng2007list,ben-sasson2010subspace}, suggesting that Reed--Solomon codes could not be list-decoded or list-recovered much beyond the Johnson bound.
Since then, an exciting line of work has shown, to contrary, that Reed--Solomon codes can beat the Johnson bound for list-decoding \cite{rudra2014every,shangguan2020combinatorial, ferber2022listdecodability,goldberg2022singletontype,guo2022improved, brakensiek2023generic, guo2023randomly, AGL24}, and, in fact, can be list-decoded up to capacity \cite{brakensiek2023generic, guo2023randomly, AGL24}. All of these works studied \emph{randomly punctured} Reed--Solomon codes, where $\alpha_1,\dots,\alpha_n$ are chosen at random from a larger field $q$.

Despite the exciting progress for list-decoding, there has been comparatively little progress on list-recovery.
Lund and Potukuchi \cite{lund2020list} and Guo, Li, Shanggaun, Tamo, and Wootters \cite{guo2022improved} proved that (randomly punctured) Reed--Solomon codes are list-recoverable beyond the Johnson bound in the low-rate regime: \cite{lund2020list} shows $(\rho,\ell,L)$-list-recovery for $\rho\le 1-1/\sqrt{2}$, $L=O(\ell)$ and rate $O\inparen{\frac{1}{\sqrt{\ell}\log q}}$, and \cite{guo2022improved} shows $\inparen{\Omega\inparen{\frac{\varepsilon}{\sqrt{\ell}\log(1/\varepsilon)}},\ell, O(\ell/\varepsilon) }$-list-recovery for rate $1-\varepsilon$ Reed--Solomon codes.
Both improve on the Johnson radius of $O\inparen{\frac{1}{\ell} }$ in the low rate setting.

In \cite{levi2024random}, Levi, Mosheiff and Shagrithaya showed that random Reed--Solomon codes and random linear codes are \emph{locally equivalent}, meaning that both random code families achieve identical rate thresholds for all ``local properties," which include (the complements of) list-decoding and list-recovery.
Thus, our result for list-recovery of random linear codes transfers to random RS codes as well. 
\begin{theorem}[Theorem~\ref{cor:list-rec-rs}, Informal]
    For all $R,\varepsilon\in(0,1)$, a randomly punctured Reed--Solomon code of length $n$ over alphabet size $q=n\cdot (\ell/\varepsilon)^{(\ell/\varepsilon)^{O(\ell/\varepsilon)}}$, of rate $R$ is $\inparen{1-R-\eps, \ell, \inparen{\frac{\ell}{\eps}}^{O(\ell/\eps)}}$-list-recoverable with high probability.
    \label{thm:main-2}
\end{theorem}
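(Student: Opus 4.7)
The plan is to obtain Theorem~\ref{cor:list-rec-rs} as a direct consequence of Theorem~\ref{thm:main-1} via the local equivalence theorem of Levi, Mosheiff, and Shagrithaya~\cite{levi2024random}. Roughly, that theorem says: for any ``local'' property $P$ (one witnessed by a bounded-size configuration of codewords) with locality $b$, if random linear codes of rate $R$ over $\F_{q_0}$ satisfy $P$ with high probability, then random (punctured) Reed--Solomon codes of the same rate and length $n$ satisfy $P$ with high probability, provided the RS alphabet has size roughly $n \cdot q_0^{b}$.

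The first step is to verify that ``failing to be $(1-R-\varepsilon,\ell,L)$-list-recoverable'' is a local property. A failure is witnessed by an $(L+1)$-tuple of codewords $c_1,\ldots,c_{L+1}$ together with sets $S_1,\ldots,S_n\subset \F_q$ of size $\ell$ such that every $c_j$ agrees with the rectangle $S_1\times\cdots\times S_n$ on at least an $(R+\varepsilon)$-fraction of the coordinates. Since this condition depends only on the unordered multiset of $L+1$ codewords (together with the externally chosen rectangle), list-recovery is local with locality parameter $b=L+1=(\ell/\varepsilon)^{O(\ell/\varepsilon)}$.

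Next, I would apply Theorem~\ref{thm:main-1} with base alphabet $q_0 = \ell^{\Theta(1/\varepsilon)}$ to obtain that random linear codes of rate $R$ over $\F_{q_0}$ are $\inparen{1-R-\varepsilon,\ell,(\ell/\varepsilon)^{O(\ell/\varepsilon)}}$-list-recoverable with high probability. The local equivalence theorem then transfers this property to randomly punctured Reed--Solomon codes of rate $R$ and length $n$, provided the RS alphabet has size at least $n\cdot q_0^{b}$. Substituting $q_0 = \ell^{\Theta(1/\varepsilon)}$ and $b = (\ell/\varepsilon)^{O(\ell/\varepsilon)}$ gives exactly the stated alphabet bound $q = n\cdot (\ell/\varepsilon)^{(\ell/\varepsilon)^{O(\ell/\varepsilon)}}$, and the output list size $(\ell/\varepsilon)^{O(\ell/\varepsilon)}$ is preserved by the transfer.

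The main obstacle is largely bookkeeping: checking that list-recovery fits the precise formal hypotheses of the local equivalence theorem (symmetry under coordinate permutations, closure under subcodes, and the correct parametrization of ``bad sets'') and tracking the exact alphabet blowup. It is worth emphasizing that the alphabet-independence of the output list size in Theorem~\ref{thm:main-1} is what makes this reduction yield a tight bound: any $q_0$-dependence in $L$ would compound with the $q_0^{b}$ alphabet blowup and destroy the clean final list size, which is precisely the point flagged in Remark~\ref{rem:main-2}.
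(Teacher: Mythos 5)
Your proposal follows essentially the same route as the paper: prove the random-linear-code result, observe that the complement of $(1-R-\varepsilon,\ell,L)$-list-recoverability is a local property with locality $b=L+1$ (an $(L+1)$-LCL property in the paper's terminology), and invoke the Levi--Mosheiff--Shagrithaya equivalence, with the alphabet-independence of $L$ being the crucial enabling point exactly as you flag. The one place your recollection of the transfer diverges from the actual mechanism is worth noting: the LMS result is a same-field threshold statement (one first deduces $R\le R_{\calP}$ from the RLC theorem applied over $\F_q$ itself, then applies the RS threshold theorem at rate $R-\varepsilon'$, absorbing a small rate loss), and the alphabet lower bound comes from requiring $q$ to dominate roughly $|\calF_{\calP}|^{O(b/(\varepsilon' n))}\approx \ell^{O(b^2/\varepsilon')}$ rather than from a ``$q_0^{b}$ field-change blowup''---though both happen to yield the same doubly-exponential form $n\cdot(\ell/\varepsilon)^{(\ell/\varepsilon)^{O(\ell/\varepsilon)}}$.
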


\begin{remark}
    We note that in order to use the equivalence result from \cite{levi2024random}, it is crucial that the upper bound on the list size be independent of the alphabet size, as guaranteed by Theorem~\ref{thm:main-1}. Hence, previous results on list size cannot be used with the equivalence result.
    \label{rem:main-2}
\end{remark}

\begin{remark}
    A fruitful line of work \cite{guruswami2008explicit,kopparty2015listdecoding,guruswami2013linearalgebraic,dvir2012subspace,kopparty2018improved,tamo2023tighter} has culminated in output list sizes of $O\inparen{\frac{\ell}{\eps}}^{O(\log(\ell)/\varepsilon)}$ for various explicit list-recoverable codes such as Folded Reed--Solomon codes and Multiplicity codes. 
    This list size is better than our list size of $(\frac{\ell}{\varepsilon})^{O(\ell/\varepsilon)}$ by roughly a factor of $\ell/\log(\ell)$ in the exponent.
    However, our results are still interesting because, as described above, the list-recovery of random linear codes and Reed--Solomon codes are fundamental questions, and also because our results yield linear codes for list-recovery and use much smaller alphabet sizes.
\end{remark}

\paragraph{Lower bounds for list-recovery.}
We now discuss impossibility results for list-recovery.
An early impossibility result of Guruswami and Rudra in \cite{guruswami2006limits} showed that, in the setting of zero-error list-recovery ($\rho=0$), many full length Reed--Solomon codes of rate $R$ require $R\le 1/\ell$ in order to have $\poly n$ output list size, so many full length Reed--Solomon codes cannot be list-recovered beyond the Johnson bound --- note this does not contradict Theorem~\ref{thm:main-2}, as we consider randomly punctured, as opposed to full length ($n=q$) codes.
More recently it was shown that achieving list-recovery capacity requires exponential list size $\ell^{\Omega(1/\varepsilon)}$ for particular codes: random linear codes in the high-rate zero-error ($\rho=0$) regime \cite{guruswami2021bounds}, random linear codes in general parameter settings \cite{levi2024random}, and for Reed--Solomon codes, Folded Reed--Solomon codes, and Multiplicity codes in general parameter settings \cite{chen2024explicit}. 

Inspired by the lower bound in \cite{chen2024explicit}, we show that \emph{any} linear code list-recoverable to capacity must have output list size at least $\ell^{\Omega(R/\varepsilon)}$. 
\begin{theorem}[Theorem~\ref{thm:lb-1}, Informal]
Over any field, any linear code of rate $R$ that is $(1-R-\varepsilon, \ell,L)$ list-recoverable must satisfy $L\ge \ell^{\Omega(R/\varepsilon)}$.
\label{thm:lb}
\end{theorem}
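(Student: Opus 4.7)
The plan is to construct bad input sets $S_1,\ldots,S_n\subseteq\F_q$ of size $\ell$ together with at least $\ell^{\Omega(R/\eps)}$ codewords of $\calC$ agreeing with the rectangle $S_1\times\cdots\times S_n$ on at least $(R+\eps)n$ coordinates, which by definition forces $L\ge\ell^{\Omega(R/\eps)}$. The guiding idea, inspired by Chen--Zhang, is to take each $S_i$ to be an $\F_p$-affine subspace of $\F_q$ of size $\ell$, where $p=\mathrm{char}\,\F_q$. Then ``$c_i\in S_i$ for every $i\in I$'' becomes an $\F_p$-linear condition on $c$, and the subspace of codewords satisfying it can be lower-bounded by codimension counting against the $\F_p$-dimension of $\calC$.

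Concretely, after replacing $\ell$ with the largest $p$-power at most $\ell$ (which only costs a constant factor in the $\Omega$-exponent), I write $\ell=p^a$ and $q=p^b$, pick any $\F_p$-subspace $W\subseteq\F_q$ of $\F_p$-dimension $a$, and pick any $I\subseteq[n]$ with $|I|=(R+\eps)n$. Setting $S_i=W$ for $i\in I$ and $S_i\ni 0$ arbitrarily for $i\notin I$, each constraint $c_i\in W$ has $\F_p$-codimension at most $b-a$ inside $\calC$ (which has $\F_p$-dimension $Rnb$), so
\[
\dim_{\F_p}\{c\in\calC : c_i\in W\ \forall i\in I\}\ \ge\ Rnb-(R+\eps)n(b-a)=n[(R+\eps)a-\eps b].
\]
Every such codeword agrees with $S_1\times\cdots\times S_n$ on all $(R+\eps)n$ coordinates of $I$. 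Writing $m:=\log_\ell q$, this gives $L\ge\ell^{n[(R+\eps)-\eps m]}$, which is at least $\ell^{R/\eps}$ whenever $m\le 1+R/\eps-\Theta(R/(n\eps^2))$, i.e., $q\lesssim\ell^{1+R/\eps}$ for $n\gtrsim R/\eps^2$. This already covers the main regime of interest, since list-recovery capacity requires $q\ge\ell^{\Omega(1/\eps)}$.

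The main technical hurdle is the complementary regime $q>\ell^{1+R/\eps}$, where the $\F_p$-dimension count above is vacuous. My plan for this case is a dichotomy on the generalized Hamming weights of $\calC$: if some $D$-dimensional subcode of $\calC$ has support at most $(1-R-\eps)n$ for some $D=\Omega(R/\eps)$, then the ``shortening'' idea (put $0\in S_i$ off the support) already trivially yields $q^D\ge\ell^D$ codewords in the rectangle; otherwise $\calC$ is ``MDS-like,'' and one can descend to a subfield subcode with alphabet size $\approx\ell^{1+R/\eps}$ (whose $\F_q$-dimension is controlled by a Delsarte-type bound) and apply the main $\F_p$-linear argument to this smaller-alphabet code. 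Merging these two sub-cases into a single uniform bound is, I expect, the most delicate step of the proof.
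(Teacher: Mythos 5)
There is a genuine gap, and it appears at the very first step. Your construction requires the input lists $S_i$ to be cosets of an $\F_p$-subspace of $\F_q$ of size close to $\ell$, where $p=\mathrm{char}\,\F_q$. When $p>\ell$ --- in particular over any prime field $\F_q$ with $q>\ell$, which is exactly the Reed--Solomon setting motivating this theorem --- the largest $p$-power at most $\ell$ is $1$, so ``rounding $\ell$ down to a $p$-power'' does not cost a constant factor: it destroys the entire bound. The theorem is claimed over \emph{any} field, so this is fatal rather than a corner case. Even in small characteristic, your codimension count only works for $q$ at most roughly $\ell^{1+R/\eps}$, whereas achieving capacity already requires $q\ge\ell^{\Omega(1/\eps)}$ and the interesting examples have $q$ far larger; the dichotomy you sketch for large $q$ is not a proof. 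Its first branch (a $D$-dimensional subcode of support at most $(1-R-\eps)n$) need not exist --- for MDS-like codes the $D$-th generalized Hamming weight is about $n-k+D>(1-R-\eps)n$ --- and its second branch is doubtful, since subfield subcodes of a rate-$R$ code can have dimension zero (the Delsarte bound only gives a useful lower bound at very high rate), so you cannot in general ``descend to a subfield subcode'' of comparable rate.

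The paper's proof sidesteps the field structure entirely. Write the generator matrix in systematic form, partition the first $k-1$ unit-vector columns into $m\approx R/\eps$ blocks of $k'+1\approx \eps n+1$ columns each, and use rank--nullity to extract from each block one combination $w_i$ that vanishes on the $k'$ coordinates $k+1,\dots,k+k'$; together with $w_m=v_k$, the vectors $w_0,\dots,w_m$ have \emph{pairwise disjoint supports} when restricted to the window $[k+k']$ of size $(R+\eps)n$. Then for an \emph{arbitrary} set of $\ell$ distinct field elements $\beta_1,\dots,\beta_\ell$, every combination $\sum_i\beta_{r_i}w_i$ takes one of at most $\ell$ values on each coordinate of the window (disjointness replaces the subspace structure you were trying to impose on the $S_i$), so all $\ell^{m+1}$ such codewords lie in a single $\ell$-list-recovery ball of radius below $1-R-\eps$. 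This gives $L\ge\ell^{\lfloor R/\eps\rfloor}$ uniformly over all fields and all $q$, with no case analysis. If you want to salvage your approach, the disjoint-support decomposition is the missing idea.
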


One takeaway from Theorem~\ref{thm:lb} is that our list sizes of $(\ell/\varepsilon)^{O(\ell/\varepsilon)}$ in Theorem~\ref{thm:main-1} and Theorem~\ref{thm:main-2} are near-optimal.
Additionally, Doron and Wootters \cite{doron2020highprobability} asked whether there were explicit list-recoverable codes with, among other desired guarantees, output list size $L=O(\ell)$. Our result shows this is not possible with \emph{any} linear code.
Lastly, our lower bound shows separation between non-linear and linear codes for list-recovery, which is perhaps surprising given that no such separation exists for list-decoding.

We point out that, for list-decoding ($\ell=1$), our lower bound is trivial ($L\ge 1$), so it does not contradict the recent results that random linear codes, randomly punctured Reed--Solomon codes, and randomly punctured Algebraic-Geometry codes achieve list-decoding capacity with output list size $O(1/\varepsilon)$ \cite{brakensiek2023generic, guo2023randomly, AGL24, brakensiek2023ag}.

\section{Preliminaries}
For a prime power $q$, let $\F_q$ be the finite field of order $q$.
Let $[n]$ denote the set $\inset{1,\ldots,n}$.
For a given vector space $V$, let $\calL(V)$ denote the set of all subspaces of $V$.
For a given set $S$, let $2^S$ denote the power set of $S$.
For a vector $v$, let $v[i]$ denote its $i$th entry.

A code $\calC \subseteq \F_q^n$ is said to be linear if it is a linear subspace, and said to have rate $R \in (0, 1)$ if $R=dim(\calC)/n$.
We say $\calC$ has relative distance $\delta \in (0, 1)$ if $\forall c \in \calC, wt(c) \geq \delta \cdot n$, where $wt(c)$ denotes the number of non-zero entries in the codeword $c$.
A matrix $\mathbf{G} \in \F_q^{n \times Rn}$ containing linearly independent columns is said to be the \deffont{generator matrix} of $\calC$ if every codeword $c \in \calC$ can be constructed using some linear combinations of the columns in $\mathbf{G}$.
$\calC$ is said to \deffont{contain} a set of vectors $s_1,\ldots s_b \in \F_q^n$ if $s_i \in \calC$ for every $i \in [b]$. 

For a vector $x \in \F_q^n$ and sets $S_1,\ldots,S_n \subseteq \F_q$, the \deffont{agreement set} $\agr(x, S_1,\ldots,S_n)$ is defined as:
\[
    \agr(x, S_1,\ldots,S_n) \coloneqq \inset{i \in [n] \mid x[i] \in S_i}.
\]
A \deffont{$\rho$-radius $\ell$-list-recovery ball} $B(\rho, S_1\times\cdots\times S_n)$ is given by input lists $S_1,\dots,S_n\subseteq\mathbb{F}_q$ of size $\ell$, and is defined to be
\begin{align}
    B(\rho, S_1\times\cdots\times S_n) = 
    \inset{x\in\mathbb{F}_q^n: \agr(x, S_1\times \dots\times S_n)\ge (1-\rho)n}.
\end{align}
We can alternatively define $(\rho, \ell, L)$-list-recovery using the above definition: a code $\calC \subseteq \F_q^n$ is $(\rho, \ell, L)$-list-recoverable if every $\rho$-radius $\ell$-list-recovery ball $B$ contains at most $L$ codewords.

For $0<R<1$, a \emph{random linear code} (RLC) of rate $R$ is a linear code whose generator matrix $\mathbf{G} \in \F_q^{n \times Rn}$ is a matrix whose entries are chosen uniformly at random from $\F_q$, independently of one another.
For $\alpha_1,\ldots,\alpha_n \in \F_q$, we use $\CRS{\alpha_1,\ldots,\alpha_n}{Rn}$ to denote the Reed--Solomon (RS) code of rate $R$ obtained by evaluating polynomials of degree $<Rn$ on evaluation points $\alpha_1,\ldots,\alpha_n \in \F_q$.
We say this is a \deffont{random RS code} if the evaluation points have been chosen uniformly at random and independently of one another\footnote{This is different from the usual model for random RS codes, where it is required that the random evaluation points be distinct. However, it can be shown that both models behave similarly (refer to \cite{levi2024random}, Appendix A for details).}.

\subsection{Local Coordinate-Wise Linear (LCL) Properties}
We now introduce the machinery in \cite{levi2024random} that connects random linear codes to (randomly punctured) Reed--Solomon codes.
A \deffont{code property} $\calP_n$ for codes of block length $n$ in $\F_q^n$ is simply a family of codes in $\F_q^n$. We say that a code $\calC_n \subseteq \F_q^n$ \deffont{satisfies} $\calP_n$ if $\calC_n \in \calP_n$.
Denoting $\calP \coloneqq \inset{\calP_n}_{n \in \NN}$, we say that an infinite family of codes $\calC_n \coloneqq \inset{\calC_n}_{n \in \NN}$ \deffont{satisfies} $\calP$ if $\calC_n \in \calP_n$ for every $n \in \NN$.
In this paper, we focus on local, monotone-increasing code properties.
A local code property, informally speaking, is defined by the inclusion of some bad set. A \deffont{monotone-increasing} code property is one for which the following is true: if $\calC$ satisfies $\calP$, then every $\calC'$ for which $\calC' \supseteq \calC$ holds, also satisfies $\calP$.
An example of local, monotone-increasing code property is the complement of $(\rho, L)$-list-decodability, defined as the family of all codes that contain at least one set of $L+1$ distinct vectors, all lying within a Hamming ball of radius $\rho$.

For a locality parameter $b \in \NN$, an ordered tuple of subspaces $\calV= (\calV_1, \ldots, \calV_n)$, where $\calV_i \in \F_q^b$ for each $i \in [n]$ is defined to be a \deffont{$b$-local profile}.
Note that $\calV \in \calL(\F_q^b)^n$.
We say that a matrix $A \in \F_q^{n \times b}$ \deffont{is contained in} $\calV$ if the $i$th row of A belongs to $\calV_i$, for all $i$.
A code $\calC \subseteq \F_q^n$ is said to \deffont{contain} $\calV$ if there exists a matrix $A \in \F_q^{n \times b}$ \emph{with distinct columns} such that the set of columns of $A$ is contained in $\calC$, and moreover, $A$ is contained in $\calV$.
For a family of $b$-local profiles $\inset{\calF_n}_{n \in \NN}$, where $\calF_n \subseteq \calL\left(\F_q^b\right)^n$, we define a \deffont{$b$-local coordinate wise linear} ($b$-LCL) property $\calP \coloneqq \inset{\calP_n}_{n \in \NN}$ as follows:
\[
    \calP_n = \inset{\calC \in \F_q^n \mid \exists \calV \in \calF_n \textrm{ such that } \calC \textrm{ contains } \calV}.
\]

The complement of $(\rho, \ell, L)$-list-recoverability is a $(L+1)$-LCL property. This is proven in \cite[Proposition 2.2]{levi2024random}, but we provide a justification in this paragraph.
Every bad set of vectors lying within a given $\rho$-radius $\ell$-list-recovery ball agrees with some input lists $S_1,\ldots,S_n \subseteq \F_q$ at a lot of coordinates.
This implies that the vectors agree with one another at a lot of coordinates as well, and once we arrange the bad vector sets as rows in a matrix of dimension $n \times (L+1)$, we can specify these agreements as linear constraints on the rows of such matrices.
Formally, the property is defined by a family of $(L+1)$-local profiles that we now describe.
For every $n \in \NN$, we define $\calF_n$ by describing the $(L+1)$-local profiles $\calV \in \calL(\F_q^{L+1})^n$ that constitute it.
Let $S_{(1-\rho)} \subseteq \left(2^{[n]}\right)^{L+1}$ denote the collection of all $L+1$-length tuples where each element is a subset of $[n]$ of size exactly $(1-\rho)n$.
Furthermore, let $M_{[\ell]} \coloneqq [\ell]^{n \times (L+1)}$ denote the set of all matrices of dimension $n \times (L+1)$ having elements in $[\ell]$ \footnote{Even though $S_{(1-\rho)}$ and $M_{[\ell]}$ depend on $n$, we have suppressed this dependence in the notation for sake of clarity.}.
Then, for every $s = (s_1,\ldots,s_{(L+1)}) \in S_{(1-\rho)}$, every $M \in M_{[\ell]}$, define $\calV(s, M) = (\calV_1,\ldots,\calV_n)$ such that for every $i \in [n]$,
\[
    \calV_i \coloneqq \inset{r \in \F_q^{L+1} \mid \forall j,k \in [L+1], r[j]=r[k] \textrm{ if } (i \in s_j) \land (i \in s_k) \land (M[i,j]=M[i,k])}.
\]
Note that each $\calV_i$ is a subspace of $\F_q^{(L+1)}$, and therefore $\calV(s, M)$ is a valid linear profile.
We can now define the associated family of linear profiles for the complement of $(\rho, \ell, L)$-list-recoverability:
\[
    \calF_n \coloneqq \inset{\calV \in \calL\left(\F_q^{L+1}\right)^n \mid \exists s \in S_{(1-\rho)}, M \in M_{[\ell]},~\textrm{ such that }\calV = \calV(s, M)}.
\]

Observe that
\begin{equation}\label{eq:up-bound-fam}
    \inabs{\calF_\calP} \le \inabs{S_{(1-\rho)}} \cdot \inabs{M_{[\ell]}} \le \binom{n}{\rho n}^{(L+1)} \cdot \ell^{(L+1)n}.
\end{equation}

In the same work, the authors also prove a threshold theorem for random linear codes (RLCs) in relation to all LCL properties, and moreover, gave a complete characterization of the rate threshold. Informally, the theorem says that RLCs of a sufficiently large alphabet exhibit a sharp threshold phenomenon for all LCL properties, that is, for every LCL property $\calP$, there exists a rate threshold $R_\calP$ such that RLCs of rate $R_\calP-\eps$ satisfy $\calP$ with high probability, and RLCs of rate $R_\calP+\eps$ \textbf{do not} satisfy $\calP$ with high probability.

\begin{theorem}[\cite{levi2024random}, Theorem 3.1]\label{thm:LMS-RLC-thresh}
    Let $\calP$ be a $b$-LCL property of codes in $\F_q^n$ and let $\calF\subseteq \calL\inparen{\F_q^b}^n$ be a corresponding family of profiles. Let $\calC\subseteq \F_q^n$ be an RLC of rate R. Then, there is some threshold rate $R_\calP$ for which the following holds.
    \begin{enumerate}
        \item If $R \ge R_{\calP}+\eps$ then $\Pr[\calC\textrm{ satisfies }\calP] \ge 1 - q^{-\eps n+b^2}$.
        \item If $R \le R_{\calP}-\eps$ then $\Pr[\calC\textrm{ satisfies }\calP] \le \inabs{\calF}\cdot q^{-\eps n + b^2}$.
        \item In particular, if $R \le R_{\calP}-\eps$ and $q \ge 2^{\frac{2\log_2 |\calF|}{\eps n}}$ then $\Pr[\calC\textrm{ satisfies }\calP] \le q^{-\frac{\eps n}2 + b^2}$.
    \end{enumerate}
\end{theorem}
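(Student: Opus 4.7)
The plan is to establish the threshold via matching first and second moment computations on the number of witness matrices for the LCL property. The key input is the standard random-subspace computation: for any $b$ vectors $v_1,\ldots,v_b \in \F_q^n$ whose span has dimension $d$, a uniformly random linear code of rate $R$ contains all of them with probability equal to $q^{-(1-R)nd}$ up to a multiplicative factor of $q^{O(b^2)}$, via the exact Gaussian binomial formula. For a profile $\calV \in \calF$, let $N_\calV$ denote the number of matrices $A \in \F_q^{n \times b}$ with distinct columns satisfying $A_i \in \calV_i$ for every $i \in [n]$ such that every column of $A$ lies in $\calC$. Grouping by the dimension of the column span yields
\[
    \mathbb{E}[N_\calV] \;=\; \sum_{d=1}^{b} q^{-(1-R)nd} \cdot M_{\calV,d},
\]
where $M_{\calV,d}$ counts distinct-column matrices contained in $\calV$ whose column span has dimension exactly $d$. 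I would define $R_\calP$ so that the map $R \mapsto \max_{\calV \in \calF} \mathbb{E}[N_\calV]$ crosses $q^{O(b^2)}$ precisely at $R = R_\calP$, making $R_\calP$ the natural threshold.

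For the upper bound in parts 2 and 3, I would apply Markov's inequality to the counting variable $X = \sum_{\calV \in \calF} \mathbf{1}[\calC \text{ contains } \calV]$, giving $\Pr[\calC \text{ satisfies } \calP] \le \sum_{\calV} \mathbb{E}[N_\calV]$. Dropping the rate by $\eps$ below $R_\calP$ multiplies each term in the expansion of $\mathbb{E}[N_\calV]$ by $q^{-\eps n d} \le q^{-\eps n}$ (since $d \ge 1$), so $\max_\calV \mathbb{E}[N_\calV] \le q^{-\eps n + O(b^2)}$, yielding part 2. Part 3 is then immediate after substituting $\inabs{\calF} \le q^{\eps n/2}$ from the hypothesis on $q$.

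For the lower bound in part 1, I would fix a profile $\calV^* \in \calF$ maximizing $\mathbb{E}[N_{\calV^*}]$, which satisfies $\mathbb{E}[N_{\calV^*}] \ge q^{\eps n - O(b^2)}$ once $R \ge R_\calP + \eps$, and apply Paley-Zygmund: $\Pr[\calC \text{ contains } \calV^*] \ge \mathbb{E}[N_{\calV^*}]^2 / \mathbb{E}[N_{\calV^*}^2]$. Bounding the second moment requires, for pairs $(A, A')$ of matrices contained in $\calV^*$, computing the probability that all $2b$ columns lie in $\calC$, which equals $q^{-(1-R)n \cdot d'}$ up to $q^{O(b^2)}$, where $d'$ is the dimension of the joint column span. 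Organizing pairs by $d'$ --- ranging from $d$ (fully overlapping) to $2d$ (independent) --- and showing typical pairs behave independently gives $\mathbb{E}[N_{\calV^*}^2] \le \mathbb{E}[N_{\calV^*}]^2 \cdot q^{O(b^2)}$. Combined with monotonicity of $\calP$ in the rate --- which allows one to amplify a constant success probability to $1 - q^{-\eps n + b^2}$ by running a slightly smaller code inside a slightly larger one --- this completes part 1.

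The main obstacle is the second-moment computation: tracking the combinatorial structure of the joint column span of pairs of matrices, and verifying that non-independent pairs contribute only a $q^{O(b^2)}$ multiplicative overhead. This analysis is also where the ubiquitous $q^{b^2}$ factor appearing in all three parts of the theorem ultimately originates, absorbing the error from Gaussian binomial approximations and from summing across the $O(b)$ possible column-span dimensions.
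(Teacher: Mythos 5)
First, a point of context: the paper does not prove this statement --- it is quoted verbatim from \cite{levi2024random} and used as a black box --- so there is no in-paper proof to compare against. Measured against the proof strategy in that line of work, your overall architecture (first moment plus union bound for parts 2 and 3, a second-moment argument for part 1, and a threshold defined by where an expected witness count crosses the $q^{O(b^2)}$ scale) is the right family of ideas, and your parts 2 and 3 are essentially sound as stated.

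The genuine gap is your definition of $R_\calP$, and it breaks part 1. You take $R_\calP$ to be the rate at which $\max_{\calV\in\calF}\mathbb{E}[N_\calV]$ crosses $q^{O(b^2)}$, i.e.\ the first-moment threshold of the full profile. But containment of a profile is governed by its scarcest \emph{projection}, not by its own expected count. Concretely, take $b=2$, $\calV_1=\F_q^2$, and $\calV_i=\inset{(x,y): x=y}$ for $i\ge 2$. Any $A$ with distinct columns contained in $\calV$ has $A^{(1)}-A^{(2)}$ equal to a nonzero multiple of $e_1$, so by linearity $\calC$ contains $\calV$ if and only if $e_1\in\calC$, an event of probability at most $q^{-(1-R)n}$. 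Yet $M_{\calV,2}\approx q^{n+1}$, so $\mathbb{E}[N_\calV]\approx q^{(2R-1)n}$ is exponentially large for every $R>1/2$: your $R_\calP$ would sit near $1/2$ while the true threshold is $1-o(1)$, and part 1 at $R=1/2+\eps$ is simply false. Correspondingly, the asserted bound $\mathbb{E}[N_{\calV}^2]\le \mathbb{E}[N_\calV]^2\cdot q^{O(b^2)}$ cannot hold there, since all witnesses share the rare column direction and the pairs are maximally correlated. The correct threshold must be a minimum over all nonzero $B\in\F_q^{b\times d}$ of the normalized log-count of matrices contained in the projected profile $\calV B=(\calV_1B,\dots,\calV_nB)$ --- the exact analogue of the densest-subgraph correction in subgraph-containment thresholds --- and it is precisely this minimization that tames the second moment. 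A secondary gap: even with the right threshold, Paley--Zygmund yields only constant success probability, and your proposed amplification (a smaller code nested in a larger one) does not obviously manufacture the roughly $\eps n$ independent trials needed for the $1-q^{-\eps n+b^2}$ bound, since partitioning the generator matrix into that many independent sub-codes drives their rates below threshold; the cited proof needs a sharper argument at this step.
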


The concept of LCL properties allows for ``transfer type" theorems between random linear codes and random RS codes. In more detail, for every reasonable LCL property (that is, for every LCL property whose corresponding family of profiles is large), the rate thresholds for random linear codes and random RS codes are the same. That is, any rate threshold proved for LCL properties of RLCs also applies for random RS codes, and vice versa. For our purposes, we only require one part of this result, which we formally state below.
\begin{theorem}[\cite{levi2024random}, Theorem~3.10 (part 1) (Threshold theorem for RS codes)]\label{thm:lms-red}
    Let $\calP$ be a $b$-LCL property of codes in $\F_q^n$, with associated local profile family $\calF_\calP \subseteq \calL\inparen{\F_q^b}^n$ and (random linear code) threshold rate $R_\calP$.
    Let $0 < R' < 1$ and let $\calC = \CRS[\F_q]{\alpha_1,\dots,\alpha_n}{R'n}$, and $\alpha_1,\dots,\alpha_n$ are sampled independently and uniformly from $\F_q$.
    Assume that $q > R'nb$.
    Fix $\eps' n \ge 2b(b+1)$.
    If $R'\le R_\calP-\eps'$, then 
          \begin{equation}\label{eq:MainReductionRLCtoRS}
              \Pr [\calC \textrm{ satisfies }\calP] \le (2^b-1) \cdot \inparen{\frac {(4b)^{4b}R'n}  {\eps' q}}^{\frac{\eps' n}{2b}} \cdot |\calF_\calP|.
          \end{equation}
\end{theorem}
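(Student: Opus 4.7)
My plan is to prove this transfer theorem by a union bound over profiles combined with a Schwartz--Zippel-style analysis over the random evaluation points. I would start with
\[
\Pr[\calC \text{ satisfies }\calP] \;\le\; \sum_{\calV \in \calF_\calP} \Pr[\calC \text{ contains }\calV],
\]
reducing the task to uniformly bounding $\Pr[\calC \text{ contains }\calV]$ for each fixed $b$-local profile $\calV=(\calV_1,\dots,\calV_n)$, which produces the $|\calF_\calP|$ factor in the final bound.

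For a fixed $\calV$, containment of $\calV$ in $\calC=\CRS{\alpha_1,\dots,\alpha_n}{R'n}$ is equivalent to the existence of a polynomial tuple $F=(f_1,\dots,f_b)\in\F_q[x]_{<R'n}^b$ satisfying $F(\alpha_i)\in\calV_i$ for every $i$ and inducing distinct codewords. Fixing linear maps $\pi_i\colon\F_q^b\to\F_q^{b-\dim\calV_i}$ with kernel $\calV_i$, the event $F(\alpha_i)\in\calV_i$ amounts to $\alpha_i$ being a common root of the components of $\pi_i\circ F$, each a polynomial in $\alpha$ of degree less than $R'n$. Either $\pi_i\circ F$ vanishes identically on $\F_q$ (a ``degenerate'' condition forcing the coefficient vectors of $F$ to lie in $\calV_i$), or by Schwartz--Zippel the probability over a uniform $\alpha_i$ is at most $R'n/q$. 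Independence of the $\alpha_i$'s then yields $\Pr[F\text{ works}]\le (R'n/q)^{n-|T(F)|}$, where $T(F)\subseteq[n]$ is the degenerate set of $F$.

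The combinatorial core is aggregating across all tuples $F$. Stratifying by subsets $S\subseteq[n]$ with $S\subseteq T(F)$, the set of such $F$'s forms a linear subspace of $\F_q[x]_{<R'n}^b$ of dimension $R'n\cdot\dim(\cap_{i\in S}\calV_i)$, giving a tractable count. The $(2^b-1)$ prefactor arises from a preliminary union bound over which non-empty subset of the $f_j$'s is taken to be non-zero (equivalently, over the possible column-supports of $F$), which both splits the argument into tractable cases and handles distinctness of columns --- distinct polynomials of degree below $R'n$ yield distinct codewords except on the negligible event (probability $\le b^2(R'n/q)^n$) that all $n$ random $\alpha_i$'s happen to be common roots of some difference $f_j-f_{j'}$.

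The main obstacle is extracting the advertised exponent $\eps' n/(2b)$ of effective Schwartz--Zippel factors from the stratification, and this is where the hypothesis $R'\le R_\calP-\eps'$ must enter. The intuition is that any tuple $F$ with $|T(F)|>n-\eps'n/(2b)$ forces its coefficient vectors into $\cap_{i\in T(F)}\calV_i$, which, viewed as profile data for a random linear code of rate $R'$, would make the RLC contain $\calV$ with non-trivial probability --- contradicting the role of $R_\calP$ as the RLC threshold. Balancing the growth of $\binom{n}{|S|}$ against the decay $(R'n/q)^{n-|S|}$ and the shrinking dimension of $\cap_{i\in S}\calV_i$, while losing only a $(4b)^{4b}$-type factor per coordinate, is the delicate technical step that produces the precise constants in the final bound.
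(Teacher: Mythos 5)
First, a point of reference: the paper does not prove this statement. It is Theorem~3.10 of \cite{levi2024random}, quoted verbatim and used as a black box, so there is no in-paper proof to compare your argument against; any assessment has to be of your sketch on its own terms.

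On those terms, your outline --- union bound over profiles yielding the $|\calF_\calP|$ factor, Schwartz--Zippel over the i.i.d.\ evaluation points yielding $R'n/q$ factors, and stratification of polynomial tuples $F$ by their degenerate set $T(F)$, with the correct observation that the tuples degenerate on $S$ form a space of dimension $R'n\cdot\dim\bigl(\bigcap_{i\in S}\calV_i\bigr)$ --- is the right general strategy and does resemble the argument in \cite{levi2024random}. But the decisive step is exactly the one you defer. The hypothesis $R'\le R_\calP-\eps'$ cannot be used as stated unless you invoke the explicit characterization of the RLC threshold rate: in \cite{levi2024random}, $R_\calP$ is computed as a max--min expression over profiles $\calV$ and nonzero subspaces $B\le\F_q^b$ involving the dimensions of the projections of the $\calV_i$ onto $B$. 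Your plan instead reasons about $\dim\bigl(\bigcap_{i\in S}\calV_i\bigr)$ for large degenerate sets $S$ and asserts that a large such intersection ``would make the RLC contain $\calV$ with non-trivial probability, contradicting the role of $R_\calP$.'' That implication is not automatic: containment of $\calV$ in an RLC requires $b$ distinct codewords realizing the entire profile, whereas a large intersection only describes one stratum of degenerate tuples; converting the latter into a bound on $R_\calP$ is precisely the content of the threshold characterization you would need to prove or quote. Without it you cannot extract the exponent $\eps'n/(2b)$, and the balancing of $\binom{n}{|S|}$, $(R'n/q)^{n-|S|}$, and $q^{R'n\dim(\cap_{i\in S}\calV_i)}$ that you yourself flag as ``delicate'' is where the entire theorem lives. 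As a strategic outline your proposal is sound; as a proof it has a gap at its core.
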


\section{List-recovery of Random Linear Codes}\label{sec:list-rec-rlcs}
In this section, we prove Theorem~\ref{thm:main-1}, that random linear codes achieve list-recovery capacity with constant output list size. Formally, we show the following.
\begin{theorem}
    Fix $0 < R < 1$, $\eps>0$ so that $(1-R-\eps)>0$, $\ell \in \NN$, and let $q$ be a prime power such that $q \ge \max \inparen{\ell^{\frac{8R}{\eps}+6},\ell \cdot 2^{4/\eps}}$.
    Let $\calC \subseteq \F_q^n$ be an RLC of rate $R$.
    Then with probability at least $1-2q^{-\frac{\eps n}{8}}$, $\calC$ is $\inparen{(1-R-\eps), \ell, L}$-list-recoverable with $L$ satisfying $L \le \left(\frac{2\ell}{\eps}\right)^{2\ell/\eps}$.
    \label{thm:list-rec-rlcs}
\end{theorem}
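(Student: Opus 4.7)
The plan is to combine a Zyablov--Pinsker style dimension argument with the recent result of Kopparty--Ron-Zewi--Saraf--Wootters and Tamo bounding the intersection of a low-dimensional subspace of good distance with an $\ell$-list-recovery ball. Naively the Zyablov--Pinsker union bound yields at most $q^D$ codewords per ball for $D = O(\ell/\eps)$; we strengthen this to show that the codewords in any list-recovery ball already \emph{span} a subspace of dimension at most $D$, and then use the subspace-intersection bound to replace $q$ by $\ell/\eps$ in the final list size.

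I would first show that, with probability at least $1 - q^{-\eps n/8}$ over the random generator $\mathbf{G} \in \F_q^{n \times Rn}$, for every $\ell$-list-recovery ball $B = B(1-R-\eps, S_1\times\cdots\times S_n)$, the intersection $\calC \cap B$ spans a subspace of dimension at most $D := 2\ell/\eps$. The complementary bad event is that there exist $D+1$ linearly independent messages $m_0, \dots, m_D \in \F_q^{Rn}$ whose codewords $\mathbf{G} m_0, \dots, \mathbf{G} m_D$ all lie in a common ball $B$. I would union bound over input lists ($q^{\ell n}$ choices), over ordered $(D+1)$-tuples of messages ($q^{(D+1)Rn}$ choices), and, for each tuple, over agreement sets $T_j \subseteq [n]$ of size $(R+\eps)n$ recording where each codeword matches the input list. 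For linearly independent messages, the randomness of $\mathbf{G}$ makes the vectors $(\mathbf{G} m_j[i])_{0 \le j \le D}$ uniform on $\F_q^{D+1}$ and independent across $i$, so the inner probability factors as $(\ell/q)^{(D+1)(R+\eps)n}$. Combining, the bad probability is at most
\[
q^{\ell n + (D+1)Rn} \cdot \binom{n}{(R+\eps)n}^{D+1} \cdot (\ell/q)^{(D+1)(R+\eps)n}.
\]
The hypothesis $q \ge \ell \cdot 2^{4/\eps}$ absorbs $\binom{n}{(R+\eps)n}^{D+1} \le 2^{(D+1)n}$ into $q^{(D+1)\eps n/4}$, and $q \ge \ell^{8R/\eps+6}$ absorbs $\ell^{(D+1)(R+\eps)n}$ into $q^{(D+1)\eps n/6}$; with $D = 2\ell/\eps$, the overall exponent of $q$ becomes at most $\ell n - 7(D+1)\eps n/12 \le -\eps n/8$.

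Standard tail estimates also show that $\calC$ has relative distance at least $1 - R - \eps/2$ except with probability at most $q^{-\eps n/8}$, and this distance is inherited by every subspace of $\calC$. On the intersection of these two high-probability events, for any list-recovery ball $B$, the subspace $V := \mathrm{span}(\calC \cap B)$ has dimension at most $D = O(\ell/\eps)$ and relative distance at least $1 - R - \eps/2$, so the cited subspace-intersection bound yields $|\calC \cap B| \le |V \cap B| \le (\ell/\eps)^{O(D)} \le (2\ell/\eps)^{2\ell/\eps}$. This gives $(1-R-\eps, \ell, L)$-list-recoverability with the claimed $L$ and total failure probability at most $2 q^{-\eps n/8}$.

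The principal technical delicacy lies in the dimension bound: choosing $D$ just large enough that $(D+1)\eps n$ dominates $\ell n$ after accounting for the binomial and $\ell$-power factors, yet tight enough that the final list size $(\ell/\eps)^{O(D)}$ remains of the order $(\ell/\eps)^{O(\ell/\eps)}$. The two alphabet constraints in the hypothesis are calibrated precisely so that each unwanted factor consumes only a constant fraction of the $(D+1)\eps n$ savings; everything else is a direct invocation of the cited subspace-intersection bound together with standard distance estimates for random linear codes.
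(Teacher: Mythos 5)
Your proposal follows essentially the same route as the paper: a Zyablov--Pinsker union bound showing that, with probability $1-q^{-\Omega(\eps n)}$, every $(1-R-\eps)$-radius $\ell$-list-recovery ball contains fewer than $2\ell/\eps$ linearly independent codewords, combined with the Gilbert--Varshamov distance guarantee and the Kopparty--Ron-Zewi--Saraf--Wootters/Tamo bound of $(2\ell/\eps)^{r}$ on the intersection of a distance-$(\delta-\eps/2)$ subspace of dimension $r$ with such a ball. The only differences are cosmetic (you union-bound over agreement sets rather than computing the ball volume directly), and your calibration of the two alphabet-size hypotheses matches the paper's.
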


The theorem follows as a consequence of two lemmas. We first state both lemmas, and then give the proof of Theorem~\ref{thm:list-rec-rlcs} using them. The first lemma essentially states that any low dimensional subspace with good distance has few points in a list-recovery ball. This lemma appears in \cite{kopparty2018improved,tamo2023tighter}; we state the version from \cite[Lemma 3.1]{tamo2023tighter}.

\begin{lemma}[\cite{tamo2023tighter}, see also \cite{kopparty2018improved}]
For $\eps>0$ and $\ell \in \NN$, let $\calC \subseteq \F_q^n$ be a linear code with relative distance $\delta > \eps/2$ that is $(\delta - \frac{\eps}{2}, \ell, L)$-list-recoverable. Assume further that any output list is contained in a subspace $V \subseteq \calC$ of dimension $r$.
Then the output list size $L \le \left(\frac{2\ell}{\eps}\right)^r$.
    \label{lem:tamo23}
\end{lemma}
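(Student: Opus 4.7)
The plan is to prove this by induction on the dimension $r$ of $V$, peeling off one coordinate at a time. Intuitively, each dimension should buy a factor of at most $2\ell/\eps$: at each stage I will find a single coordinate $i^\ast$, pigeonhole the output list on the $\ell$ possible matching values in $S_{i^\ast}$, and then collapse one dimension of $V$.

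The base case $r=0$ forces $V=\{0\}$, giving a list of size at most $1$. For the inductive step, suppose the bound holds for $r-1$, and fix an output list $c_1,\ldots,c_L \in V$ for input sets $S_1,\ldots,S_n$ of size $\ell$ and radius $\rho := \delta - \eps/2$, writing $T_j := \agr(c_j, S_1,\ldots,S_n)$ so that $|T_j|\ge (1-\rho)n$. I want to choose a coordinate $i^\ast$ that simultaneously (a) has $c_j[i^\ast]\in S_{i^\ast}$ for many $j$, and (b) lies in the ``support union'' $D := \{i \in [n] : \exists c\in V,\ c[i]\ne 0\}$, so that the hyperplane $W := \{c\in V : c[i^\ast]=0\}$ of $V$ has dimension exactly $r-1$.

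The main obstacle is guaranteeing (b): a naive average over all $n$ coordinates might land on a coordinate where $V$ evaluates identically to $0$, in which case $W=V$ and the induction collapses. This is where the distance hypothesis enters: every nonzero $c\in V\subseteq \calC$ has weight $\ge \delta n$, so $|D|\ge \delta n$. Consequently $|T_j \cap D| \ge |T_j| - (n-|D|) \ge (\delta-\rho)n = (\eps/2)n$, and averaging $|T_j\cap D|$ over $i\in D$ produces an $i^\ast \in D$ with at least a $\eps/(2\delta)\ge \eps/2$ fraction of the $c_j$'s satisfying $c_j[i^\ast]\in S_{i^\ast}$. Pigeonholing further over the $\ell$ values of $S_{i^\ast}$ extracts a value $v^\ast\in S_{i^\ast}$ and a subset $J\subseteq[L]$ of size $|J|\ge L\eps/(2\ell)$ with $c_j[i^\ast]=v^\ast$ for every $j\in J$.

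To close the induction, pick any $j_0\in J$ and shift: the codewords $\{c_j - c_{j_0}\}_{j\in J}$ are $|J|$ distinct elements of the $(r-1)$-dimensional subspace $W$, and form an output list of the same size for the translated input sets $S_i - c_{j_0}[i]$, which still have size $\ell$ and radius $\rho$. Since $W\subseteq \calC$ inherits relative distance $\ge \delta$, the inductive hypothesis applied to the code $W$ and the subspace $W$ itself gives $|J|\le (2\ell/\eps)^{r-1}$. Combined with $|J|\ge L\eps/(2\ell)$, this yields $L\le (2\ell/\eps)^r$. (Whenever $L<2\ell/\eps$ the bound holds trivially, so we may assume $|J|\ge 1$ so that the value $v^\ast$ in the pigeonhole step is indeed witnessed.)
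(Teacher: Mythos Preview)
The paper does not prove this lemma itself; it is quoted from \cite{tamo2023tighter} (with \cite{kopparty2018improved} as an alternative reference). Your inductive argument---find a coordinate in the support of $V$, pigeonhole on the $\ell$ values of $S_{i^\ast}$ there, shift by a fixed list element, and recurse into the resulting $(r-1)$-dimensional hyperplane---is correct and is exactly the proof in those references.

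One small slip in the averaging step: from $\sum_j |T_j\cap D|\ge L(\eps/2)n$ and averaging over $i\in D$ you get an $i^\ast\in D$ hit by at least $L(\eps/2)n/|D|$ of the $c_j$'s. You only know $|D|\ge \delta n$ (from the weight of a single nonzero vector in $V$), not $|D|\le \delta n$, so the intermediate bound ``$\eps/(2\delta)$'' is not justified. Simply use $|D|\le n$ to obtain the fraction $\ge \eps/2$ directly, which is all you need for the rest of the argument.
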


The second lemma uses the Zyablov--Pinsker argument \cite{zyablov1981list}, showing that a random linear code does not have too many linearly independent codewords within a list-recovery ball.

\begin{lemma}
    Fix $0 < R < 1$, $\eps>0$ so that $(1-R-\eps)>0$, $\ell \in \NN$, and let $q$ be a prime power such that $q \ge \max \inparen{\ell^{\frac{8R}{\eps}+6},\ell \cdot 2^{4/\eps}}$.
    Let $\calC \subseteq \F_q^n$ be an RLC of rate $R$.
    Then with probability at least $1-q^{-\frac{\eps nL}{8}}$, for every input lists $\calS_1, \ldots, \calS_n$ of size $\ell$, the maximal linearly independent subset of $\calC$ within the $(1-R-\eps)$ radius $\ell$-list-recovery ball $B\inparen{ (1-R-\eps), S_1 \times  \cdots \times S_n}$ has size less than $2\ell/\eps$.
    \label{lem:zp}
\end{lemma}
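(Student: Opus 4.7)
The plan is to run the classical Zyablov--Pinsker argument \cite{zyablov1981list} in the list-recovery setting. Set $L := \lceil 2\ell/\eps \rceil$. It suffices to show that with probability at least $1 - q^{-\eps n L / 8}$ over $\mathbf{G}$, there do not exist input lists $S_1,\ldots,S_n \subseteq \F_q$ of size $\ell$ together with linearly independent codewords $c_1,\ldots,c_L \in \calC$ all lying in the ball $B((1-R-\eps), S_1 \times \cdots \times S_n)$.

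The main step is to bound the failure probability for a single fixed choice of input lists $(S_1,\dots,S_n)$ and a fixed tuple of linearly independent messages $m_1,\ldots,m_L \in \F_q^{Rn}$. Because $(m_1,\dots,m_L)$ has full column rank and $\mathbf{G}$ has i.i.d.\ uniform rows, the rows of the $n \times L$ codeword matrix $[\mathbf{G}m_1 \mid \cdots \mid \mathbf{G}m_L]$ are i.i.d.\ uniformly distributed over $\F_q^L$. Union-bounding over the choice of agreement sets $T_j \subseteq [n]$ of size $(R+\eps)n$ for each $c_j$, and using independence of rows together with the fact that each entry of each row is uniform in $\F_q$, the probability that all $L$ codewords lie in the ball is at most
\[
\binom{n}{(R+\eps)n}^L (\ell/q)^{L(R+\eps)n} \;\le\; 2^{nL} \cdot \ell^{L(R+\eps)n} \cdot q^{-L(R+\eps)n}.
\]

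Then I would union bound over the $\le q^{\ell n}$ choices of input lists $(S_1,\dots,S_n)$ and the $\le q^{RnL}$ choices of linearly independent message tuples (noting that if $L$ linearly independent codewords lie in some ball, then any pre-images under $\mathbf{G}$ are also linearly independent, so this covers all bad events). This yields total failure probability at most $q^{\ell n - \eps n L} \cdot 2^{nL} \cdot \ell^{L(R+\eps)n}$. Substituting $L = 2\ell/\eps$ collapses the $q$-exponent to $-\ell n$, and the hypotheses $q \ge \ell^{8R/\eps + 6}$ and $q \ge \ell \cdot 2^{4/\eps}$ are calibrated so that $2^{nL} \cdot \ell^{L(R+\eps)n} \le q^{3\ell n / 4}$, giving the claimed bound of $q^{-\ell n / 4} = q^{-\eps n L / 8}$. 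There is no genuine obstacle here: the argument is entirely standard Zyablov--Pinsker, and the only real work is the parameter check at the end, where the $\ell^{8R/\eps}$-factor in the hypothesis on $q$ is what absorbs the $q^{RnL}$ cost of union-bounding over message tuples, while the $\ell \cdot 2^{4/\eps}$ factor absorbs the $2^{nL}$ from the union bound over agreement patterns.
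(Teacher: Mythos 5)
Your proposal is correct and follows essentially the same route as the paper's proof: a Zyablov--Pinsker union bound over the at most $q^{\ell n}$ input lists and $q^{RnL}$ linearly independent message tuples, with the per-configuration probability $(|B|/q^n)^L$ controlled using the two alphabet-size hypotheses. The only cosmetic difference is that you bound $|B|/q^n$ by union-bounding over agreement sets, $\binom{n}{(R+\eps)n}(\ell/q)^{(R+\eps)n}$, whereas the paper writes out the ball volume and invokes the $(q/\ell)$-ary entropy function; under $q\ge\ell\cdot 2^{4/\eps}$ these give the same estimate, and your parameter calibration indeed closes with $q\ge\ell^{8R/\eps+6}$.
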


\begin{proof}[Proof of Lemma~\ref{lem:zp}]
    Denote $\rho := 1-R-\eps$ and $L := 2\ell/\eps$.
    We also assume $q$ is a multiple of $\ell$ for simplicity of exposition, and note that the result holds in the general case as well.
    We show that an RLC ``avoids" all bad configurations with high probability.
    A \deffont{bad configuration} is a set $V$ of linearly independent vectors of size $L$ such that there exist input lists lists $\calS_1, \ldots, \calS_n$ of size $\ell$, so that $V \subseteq B\inparen{ \rho, S_1 \times  \cdots \times S_n}$.
    We say that $\calC$ contains a bad configuration $V$ if for every $v \in V$, $v$ is also in $\calC$. If this condition is not satisfied, then we say that $\calC$ does not contain $V$.
    It is easy to see that if $\calC$ contains no bad configurations, then the maximal linearly independent subset of $\calC$ within any $\ell$-list-recovery ball of radius $\rho$ has size less than $2\ell/\eps$.
    Therefore we show that the probability of $\calC$ containing a bad configuration is low.

    Fix input lists $\calS_1, \ldots, \calS_n$ of size $\ell$, and let $B := B\inparen{ \rho, S_1 \times  \cdots \times S_n}$ be the corresponding $\rho$ radius $\ell$-list-recovery ball. The size of $B$ is $\binom{n}{\rho n}\cdot \ell^{(1-\rho)n}\cdot (q-\ell)^{\rho n}$.
    The probability that a particular configuration is bad is equal to the probability of the encodings of some $L$ linearly independent messages being inside $B$ simultaneously, which is $\inparen{\frac{|B|}{q^n}}^L$.
    By a union bound over at most $q^{n\ell}$ possible input lists and all $L$-sized linearly independent subsets of the message vectors in $\F_q^{Rn}$ (there are at most $q^{RnL}$ such subsets), we have
    \begin{align}
        \Pr_\calC[\calC \text{ contains a bad configuration}]
        &\le \left(\frac{\binom{n}{\rho n}\cdot \ell^{(1-\rho )n}\cdot (q-\ell)^{\rho n}}{q^n}\right)^L \cdot q^{n\ell}\cdot q^{RnL} \nonumber\\
        &= \left(\left(\frac{\ell}{q}\right)^{n}\cdot \binom{n}{\rho n}\cdot \left(\frac{q}{\ell}-1\right)^{\rho n}\right)^L \cdot q^{n\ell} \cdot q^{RnL}\nonumber\\
        &\le \left(\left(\frac{\ell}{q}\right)^{n}\cdot (q/\ell)^{H_{q/\ell}(\rho )n}\right)^L \cdot q^{n\ell}\cdot q^{RnL} \nonumber\\       
        &\le \left((q/\ell)^{-(1-H_{q/\ell}(\rho ))n}\right)^L \cdot q^{n\ell}\cdot q^{RnL} \nonumber\\       
        &\le \left((q/\ell)^{-\inparen{R+\frac{3\eps}{4}}n}\right)^L \cdot q^{n\ell}\cdot q^{RnL} \label{eqn:entropy-req}\\ 
        &= \ell^{(R+\frac{3\eps}{4})nL}\cdot q^{-\inparen{R+\frac{3\eps}{4}}nL}\cdot q^{\frac{\eps nL}{2}}\cdot q^{RnL}\nonumber\\
        &= \ell^{(R+\frac{3\eps}{4})nL}\cdot q^{-\frac{\eps nL}{4}} \nonumber\\
        &\leq q^{-\frac{\eps nL}{8}}. \nonumber
    \end{align}
    In Equation~\ref{eqn:entropy-req}, we used $q \geq \ell \cdot 2^{4/\eps}$, and for the last inequality, we used $q \geq \ell^{\frac{8R}{\eps}+6}$. This implies that the probability with which $\calC$ does not contain any bad configuration is at least $1-q^{-\frac{\eps nL}{8}}$.
\end{proof}

We now prove Theorem~\ref{thm:list-rec-rlcs}.
\begin{proof}[Proof of Theorem~\ref{thm:list-rec-rlcs}]
    Denote $\rho := 1-R-\eps$.
    Denote by $E_1$ the event that for a RLC $\calC$ of rate $R$, the maximal linearly independent subset of $\calC$ within every $(1-R-\eps)$ radius $\ell$-list-recovery ball has size less than $2\ell/\eps$. 
    By Lemma~\ref{lem:zp}, we know that $E_1$ happens with probability at least $1-q^{-\frac{\eps nL}{8}}$.
    Let $E_2$ denote the event that a rate $R$ RLC $\calC$ has distance at least $1-R-\frac{\eps}{2}$.
    By the Gilbert-Varshamov bound (see \cite{guruswami2022essential}, Section 4.2), and because of the fact that $q\ge \ell \cdot 2^{4/\eps} \ge 2^{4/\eps}$, $E_2$ happens with probability at least $1-q^{-\frac{\eps n}{2}}$.
    Therefore we have
    \[
        \Pr_\calC[E_1 \land E_2] \ge 1-q^{-\frac{\eps nL}{8}}-q^{-\frac{\eps n}{2}} \ge 1-2\cdot q^{-\frac{\eps n}{8}}
    \]
    When $E_1$ and $E_2$ occur simultaneously, the assumptions of Lemma~\ref{lem:tamo23} are satisfied by $\calC$ with $r=2\ell / \eps$ and $\delta=1-R-\frac{\eps}{2}$, and therefore we see that
    \[
        \Pr_\calC\inbrak{\bigwedge_B \inabs{ \calC \cap B} \le \left(\frac{2\ell}{\eps}\right)^{2\ell / \eps}} \ge 1-2\cdot q^{-\frac{\eps n}{8}}
    \]
    where $B$ is ranging over all $(1-R-\eps)$ radius $\ell$-list-recovery balls, and we are done.
\end{proof}

\section{List-Recovery of Reed--Solomon codes}\label{sec:list-rec-rs}
In this section, we will prove the following result, which says that random Reed--Solomon codes are list-recoverable to capacity with constant output list size. The proof combines Theorem~\ref{thm:list-rec-rlcs} from the previous section with Theorem~\ref{thm:LMS-RLC-thresh}, the equivalence theorem from \cite{levi2024random}.
\begin{corollary}\label{cor:list-rec-rs}
    Fix $0 < R < 1$, $\eps>0$ so that $(1-R-\eps)>0$, $\ell \in \NN$.
    Fix a constant $\eps'>0$ such that $\eps' < R$, and denote $L \coloneqq \floor{\left(\frac{2\ell}{\eps}\right)^{2\ell/\eps}}$.
    Let $\eta>0$ be a constant, and let $q$ be a prime power satisfying $q > \frac{(4(L+1))^{4(L+1)}Rn}{\eps'}\cdot 2^{\frac{((\log \ell + 2)(L+1)+ \eta)\cdot 2(L+1)}{\eps'}}$.
    Then, a random RS code of rate $R-\eps'$ over $\F_q^n$ is $(1-R-\eps, \ell, L)$-list-recoverable with probability at least $1-2^{-\eta n}$.
\end{corollary}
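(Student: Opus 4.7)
The plan is to cast the complement of $(1-R-\eps, \ell, L)$-list-recoverability as an LCL property and then apply the random-linear-to-random-RS transfer theorem of \cite{levi2024random}. The preliminaries already show that this complement is an $(L+1)$-LCL property $\calP$ with an associated profile family $\calF_\calP$ whose size is bounded in Equation~\eqref{eq:up-bound-fam} by $|\calF_\calP| \le \binom{n}{\rho n}^{L+1}\cdot \ell^{(L+1)n} \le 2^{(L+1)n(\log_2 \ell + 1)}$.

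First I would apply Theorem~\ref{thm:list-rec-rlcs} at rate $R$ over the same $\F_q^n$ used in the corollary. Because the corollary's lower bound on $q$ comfortably exceeds the threshold $\max(\ell^{8R/\eps + 6}, \ell\cdot 2^{4/\eps})$ required by that theorem, an RLC of rate $R$ satisfies $\calP$ (i.e.\ fails list-recoverability) with probability at most $2q^{-\eps n/8}$. By part~1 of Theorem~\ref{thm:LMS-RLC-thresh}, if the RLC threshold satisfied $R > R_\calP$ the failure probability would instead tend to $1$, a contradiction, so $R_\calP \ge R$. I would then invoke Theorem~\ref{thm:lms-red} with $R' := R - \eps'$, $b := L+1$, and shift $\eps'$ (the side conditions $q > R'nb$ and $\eps' n \ge 2b(b+1)$ follow from the size of $q$ and from taking $n$ sufficiently large), to bound $\Pr[\calC_{\text{RS}} \textrm{ satisfies } \calP]$ by
\[
    (2^{L+1}-1) \cdot \inparen{\frac{(4(L+1))^{4(L+1)} R'n}{\eps' q}}^{\eps' n/(2(L+1))} \cdot |\calF_\calP|.
\]

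The remaining work is algebraic bookkeeping. The hypothesis on $q$ is engineered precisely so that the middle factor above drops below $2^{-n((\log\ell + 2)(L+1) + \eta)}$; multiplying by the $2^{(L+1)n(\log_2\ell + 1)}$ bound on $|\calF_\calP|$ and by the $2^{L+1}-1$ prefactor still leaves at most $2^{-\eta n}$, since the $(L+1)n$ bits of slack in the exponent easily absorb the prefactor. The main ``obstacle'' is not conceptual but just tracking these three competing factors --- the profile family size, the $q$-dependent RS transfer penalty, and the $2^{L+1}-1$ prefactor --- and verifying that the prescribed $q$ is tight enough for the $|\calF_\calP|$ contribution to be absorbed with $\eta n$ bits of slack to spare.
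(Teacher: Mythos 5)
Your proposal matches the paper's proof essentially step for step: you frame the complement of list-recoverability as the $(L+1)$-LCL property with the profile-family bound of Equation~\eqref{eq:up-bound-fam}, use Theorem~\ref{thm:list-rec-rlcs} together with part~1 of Theorem~\ref{thm:LMS-RLC-thresh} to conclude $R_\calP \ge R$, and then apply Theorem~\ref{thm:lms-red} at rate $R' = R-\eps'$ with $b = L+1$, closing with the same bookkeeping that the paper performs (and your observation that the $(L+1)n$ bits of slack absorb the $2^{L+1}-1$ prefactor is exactly right). The argument is correct and no further comparison is needed.
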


\begin{proof}[Proof of Corollary~\ref{cor:list-rec-rs}]
    Denote $L \coloneqq \floor{\left(\frac{2\ell}{\eps}\right)^{2\ell/\eps}}$ and $b \coloneqq L+1$. Let $\calP$ be the $b$-LCL property of \textbf{not} being $(1-R-\eps, \ell, L)$-list-recoverable, and let $R_{\calP}$ be the corresponding (random linear code) threshold rate.
    By Theorem~\ref{thm:LMS-RLC-thresh}, part 1 \cite{levi2024random}, we know that if $\calC$ is an RLC of rate $R$, then the following holds for every constant $\eps^*>0$:
    \[
        \Pr [\calC \textrm{ satisfies }\calP] < 1-q^{\eps^*n + b^2} \implies R < R_\calP+ \eps^*
    \]
    According to Theorem~\ref{thm:list-rec-rlcs}, a rate $R$ RLC (having a sufficiently large alphabet size) satisfies $\calP$ only with probability at most $2q^{-\frac{\eps n}{8}} < 1-q^{\eps^*n + b^2}$.
    Therefore, $R < R_\calP+ \eps^*$ for every $\eps^*>0$, and so $R \le R_\calP$.

    We will now work with random RS codes having rate slightly less than $R$.
    Define $R' \coloneqq R-\eps' \le R_\calP-\eps'$, take $n$ to be large enough so that $\eps' n \ge 2b(b+1)$.
    Note that $q > Rnb > R'nb$.
    Define $\calC = \CRS[\F_q]{\alpha_1,\dots,\alpha_n}{R'n}$, where $\alpha_1,\dots,\alpha_n$ are sampled independently and uniformly from $\F_q$.
    Upon denoting $\calF_\calP$ to be the local profile family associated with property $\calP$, we see that the hypothesis of Theorem~\ref{thm:lms-red} \cite{levi2024random} is satisfied, and therefore, Equation~\ref{eq:MainReductionRLCtoRS} is satisfied.

    Recall that we calculated an upper bound for $\inabs{\calF_\calP}$ in Equation~\ref{eq:up-bound-fam}, and so $\inabs{\calF_\calP} \le \binom{n}{(1-R-\eps)n}^{b}\cdot \ell^{bn}$.
    Substituting this bound on $\inabs{\calF_\calP}$ in Equation~\ref{eq:MainReductionRLCtoRS},
    \begin{align*}
        \Pr [\calC \textrm{ satisfies }\calP] &\le (2^b-1) \cdot \inparen{\frac {(4b)^{4b}R'n}  {\eps' q}}^{\frac{\eps' n}{2b}} \cdot |\calF_\calP| \\
        &\le \inparen{\frac {(4b)^{4b}R'n}  {\eps' q}}^{\frac{\eps' n}{2b}} \cdot \binom{n}{(1-R-\eps)n}^{b}\cdot \ell^{bn} \\
        &\le \inparen{\frac {(4(L+1))^{4(L+1)}R'n}  {\eps' q}}^{\frac{\eps' n}{2(L+1)}} \cdot 2^{(H_2(1-R-\eps)+1)\cdot (L+1) n}.
    \end{align*}
    Because $q > \frac{(4(L+1))^{4(L+1)}R'n}{\eps'}\cdot 2^{\frac{((\log \ell + 2)(L+1)+ \eta)\cdot 2(L+1)}{\eps'}}$, we see that $\Pr [\calC \textrm{ satisfies }\calP] \le 2^{-\eta n}$.
    Thus, $\calC$ is $(1-R-\eps, \ell, L)$-list-recoverable with probability at least $1-2^{-\eta n}$.
\end{proof}

\section{Any linear code needs output list-size $\ell^{\Omega(R/\epsilon)}$}\label{sec:list-rec-impossibility}
We now prove our lower bounds for list-recovery, that any linear code list-recoverable to capacity needs output list size $\ell^{\Omega(R/\varepsilon)}$.
\begin{theorem}
    Let $R,\eps\in(0,1)$, $\ell$ be a positive integer, and $n\ge n_0(\ell,R,\eps)$ be sufficiently large.
    Let $\calC \subseteq \F^n$ be a linear code of rate $R$. If $\calC$ is  $(1-R-\eps, \ell, L)$-list-recoverable, then $L> \ell^{\floor{R/\eps}}$.
    \label{thm:lb-1}
\end{theorem}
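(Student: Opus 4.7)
The plan is to prove the contrapositive by constructing input lists $S_1, \dots, S_n \subseteq \F$ of size $\ell$ and exhibiting strictly more than $\ell^m$ codewords in the list-recovery ball $B(1-R-\eps, S_1 \times \cdots \times S_n)$, where $m = \lfloor R/\eps \rfloor$.

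I would first find an information set $I \subseteq [n]$ of $\calC$, so $|I| = k = Rn$ and the projection $\calC \to \F^I$ is a linear isomorphism. Partition $I = B_1 \sqcup \dots \sqcup B_m \sqcup I_0$ with each $|B_j| = \lfloor \eps n \rfloor$ and $|I_0| = (R - m\eps)n$, and write $A_0 = [n] \setminus I$ (of size $(1-R)n$). Pick nonzero vectors $u_j \in \F^{B_j}$, and for each $\mathbf{r} = (r_1, \dots, r_m) \in \{0, 1, \dots, \ell-1\}^m$ let $c_{\mathbf{r}} \in \calC$ be the unique codeword with $c_{\mathbf{r}}|_{B_j} = r_j u_j$ for every $j$ and $c_{\mathbf{r}}|_{I_0} = 0$. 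For generic $u_j$'s these $\ell^m$ codewords are distinct. Define the lists as follows: for $i \in B_j$, put $\{r \cdot u_j[i] : 0 \le r \le \ell-1\} \subseteq S_i$; for $i \in I_0$, put $0 \in S_i$; for $i \in A_0$, the lists are chosen below, and all lists are padded to size exactly $\ell$. By construction every $c_{\mathbf{r}}$ agrees with $S_i$ on all $i \in I$, so it already has $Rn$ agreements and only needs $\eps n$ more on $A_0$ to land in the ball.

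The main step, and the main obstacle, is handling $A_0$. On each $i \in A_0$ the value $c_{\mathbf{r}}[i] = \sum_j r_j \mu_{i,j}$ is linear in $\mathbf{r}$, with coefficients $\mu_{i,j} \in \F$ determined by the code's information-set reconstruction map and by the $u_j$'s, and $\{c_{\mathbf{r}}[i] : \mathbf{r}\}$ may have up to $\ell^m$ distinct elements in general. Following the Chen--Zhang-inspired idea, one argues that the $u_j$'s can be chosen so that on at least $\eps n$ coordinates $i \in A_0$ the tuple $(\mu_{i,1}, \dots, \mu_{i,m})$ has at most one nonzero entry; at such a ``good'' coord the values $\{c_{\mathbf{r}}[i]\}$ collapse to a set of size at most $\ell$ that a size-$\ell$ list $S_i$ covers entirely. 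The key claim is that the number of ``bad'' coords in $A_0$ (those with two or more nonzero $\mu_{i,j}$'s) can be kept at most $(1-R-\eps)n$. I expect this is the hardest part of the proof: for codes close to MDS the reconstruction coefficients tend to be all nonzero, so a naive random choice of $u_j$'s fails, and one likely needs a more subtle linear-algebraic argument exploiting dependencies among the $\mu_{i,j}$'s (for instance a carefully chosen low-rank substructure, a subfield-based restriction, or a pigeonhole/averaging argument over a structured family of candidate $u_j$'s).

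Once such a choice has been made we have $\ell^m$ codewords inside the ball. To upgrade to strictly more than $\ell^m$, I would include one additional codeword using the slack in the error budget: for example, permit one extra value $r_j = \ell$ for a single $j$ at the cost of at most $\lfloor \eps n \rfloor$ additional mismatches on $B_j$, which can still be absorbed into the total $(1-R-\eps)n$ mismatch budget provided the $A_0$ argument above is carried out with a small amount of slack, producing the $(\ell^m+1)$-st codeword in the same ball.
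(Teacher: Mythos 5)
You have correctly reproduced the paper's overall architecture --- systematic form / information set, $m\approx R/\eps$ blocks of size roughly $\eps n$, the $\ell^m$ codewords obtained by choosing one of $\ell$ coefficients per block, free agreement on the information set, and the need for roughly $\eps n$ further agreements on the redundancy part $A_0$. But the step you yourself flag as the main obstacle --- arranging that the values $c_{\mathbf r}[i]=\sum_j r_j\mu_{i,j}$ collapse to at most $\ell$ possibilities on at least $\eps n$ coordinates of $A_0$ --- is exactly the content of the theorem, and your proposal leaves it unproved, offering only candidate strategies (``a carefully chosen low-rank substructure, a subfield-based restriction, or a pigeonhole/averaging argument''). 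As written this is a genuine gap, not a routine detail: as you note, for near-MDS codes no choice of which coordinates of $A_0$ ``happen'' to be good will work generically, so the claim needs an actual mechanism.

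The paper's mechanism is a short rank--nullity argument that you are one observation away from. Fix the target redundancy coordinates \emph{in advance}: with $k=Rn$ and $k'=\lceil\frac{\eps}{R}k\rceil$, aim only at coordinates $k+1,\dots,k+k'$. Group the systematic columns $v_1,\dots,v_{k-1}$ into $m$ blocks of size $k'+1$ --- i.e.\ one \emph{more} column per block than the number of target coordinates. Each block then admits a nontrivial linear combination $w_i$ that vanishes on all $k'$ target coordinates ($k'$ homogeneous constraints on $k'+1$ unknowns), while automatically being supported, within $[k]$, only on that block's identity rows. Reserving the single column $w_m=v_k$ as the unique generator allowed to be nonzero on $k+1,\dots,k+k'$, the $m+1$ generators have pairwise disjoint supports on $[k+k']$, so every combination $\sum_i\beta_{r_i}w_i$ (with $\beta_1,\dots,\beta_\ell$ distinct field elements --- note your integer coefficients $r\in\{0,\dots,\ell-1\}$ break down in small characteristic) takes at most $\ell$ values at each coordinate of $[k+k']$. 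The remaining $n-k-k'<(1-R-\eps)n$ coordinates are simply charged to the error radius, so no agreement on them is needed at all; this also makes your final ``extra codeword from slack'' step unnecessary, since the reserved column $v_k$ already contributes a full extra factor of $\ell$, giving $\ell^{m+1}$ codewords in a single list-recovery ball of radius strictly less than $1-R-\eps$.
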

\begin{proof}
    Let $k \coloneqq Rn$ be the dimension of the code.
    Let $k' = \left \lceil{\frac{\eps}{R}\cdot k} \right \rceil$.
    Let $m = \left \lfloor{\frac{k-1}{k'+1}} \right \rfloor$.
    By Gaussian elimination and permuting rows and columns, we may, without loss of generality write the generator matrix of $\calC$ as
    \begin{align}
        \mathbf{G} = 
        \begin{bmatrix}
            1 & & & & \\
              & 1 & & & \\
              &   & \ddots & &\\
              & & & \ddots &\\
              & & & & 1\\
              \hline
              * & * & \cdots & \cdots& * \\
        \end{bmatrix}
    \end{align}
    where each $*$ is a length $n-k$ vector.
    For $i \in [k]$, let $v_i\in\mathbb{F}^n$ denote the columns.
    By rank-nullity, there exist vectors $w_0,\dots,w_{m-1}\in\mathbb{F}^n$ such that $w_i$ is a linear combination of $v_{i\cdot (k'+1)+1},\dots,v_{(i+1)\cdot (k'+1)}$ such that $w_i$ is not supported on indices $k+1,\dots,k+k'$ (there are $k'+1$ vectors and $k'$ indices).
    Now let $w_{m} = v_k$.
    Restricted to indices in $[k+k']$, vectors $w_0,\dots,w_m$ have pairwise disjoint supports: within indices $[k+k']$, for $i=0,\dots,m-1$, vector $w_i$ is supported on $i\cdot (k'+1)+1,\dots, (i+1)\cdot(k'+1)\le k-1$, and vector $w_m$ is supported on $k,\dots,k+k'$.

    Now fix $\ell$ arbitrary distinct values $\beta_1,\dots,\beta_\ell\in\mathbb{F}_q$.
    Consider the output list $\calL = \{\sum_{i=0}^m \beta_{r_i}w_i: r_i\in[\ell]\}$ to be all linear combinations of $w_i$ with coefficients from $\beta_1,\dots,\beta_\ell$.
    The fact that the vectors $w_0,\dots,w_{m}$ have pairwise disjoint supports on $[k+k']$ implies (i) the vectors $w_0,\dots,w_{m}$ are linearly independent, and so all vectors in $\calL$ are distinct, and (ii) the vectors in $\mathcal{L}$ can only take on one of $\ell$ values at any index in $[k+k']$. Therefore, we can choose input lists $S_1,\dots,S_{k+k'}$, each of size $\ell$ such that all codewords in $\mathcal{L}$ agree with all of $S_1,\dots,S_{k+k'}$.

    Choosing the rest of the input lists arbitrarily, we have that this code is not $(\rho, \ell, L)$ list-recoverable with radius $\rho = (n-k-k')/n < 1-R-\eps$ and list size $L=\ell^{m+1} \ge \ell^{\floor{R/\eps}}.$\footnote{$m+1 = \left \lfloor{\frac{k+k'}{k'+1}} \right \rfloor \ge \left \lfloor{\frac{k+\frac{\eps}{R} k}{\frac{\eps}{R} k +2}}\right \rfloor \ge \left \lfloor{\frac{R}{\eps}} \right \rfloor$, where we used that $k>2R^2/\eps^2$.}

\end{proof}

We also show that our Zyablov-Pinsker type argument in Theorem~\ref{thm:main-1} (Lemma~\ref{lem:zp}) is tight, in the sense that any linear code must have $\Omega(\ell/\varepsilon)$ linearly independent codewords in a list-recovery ball.

\begin{proposition}
    Let $R,\eps\in(0,1)$, $\ell$ be a positive integer, and $n\ge n_0(\ell,R,\eps)$ be sufficiently large.
    Let $\calC$ be a linear code of rate $R$. Then there exists a $(1-R-\eps)$ radius $\ell$-list-recovery ball $B$ that contains at least $\ceil{(1-R)\ell/\eps}-1$ linearly independent elements of $\calC$.
    \label{pr:lb}
\end{proposition}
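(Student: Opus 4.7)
The plan is to put $\calC$ in systematic form so that, after a coordinate permutation, its generator matrix has identity on top and some matrix $B \in \F^{(n-k) \times k}$ on the bottom, with basis codewords $v_1,\dots,v_k$ satisfying $v_j[i] = \delta_{ij}$ for $i \in [k]$ and $v_j[k+p] = B_{p,j}$ for $p \in [n-k]$. I take the $m := \lceil (1-R)\ell/\eps \rceil - 1$ linearly independent codewords to be simply $c_j := v_j$ for $j \in [m]$ (which is valid since $m \le k = Rn$ for $n$ large), and the remaining task is to design input lists $S_1,\dots,S_n$ of size $\ell$ under which every $c_j$ agrees at $\ge (R+\eps)n$ coordinates.

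At the top identity coordinates $i \in [k]$, since $c_j[i] \in \{0,1\}$, for $\ell \ge 2$ I set $S_i$ to be any size-$\ell$ set containing $\{0,1\}$, so every $c_j$ agrees at all $k$ top coordinates; for $\ell = 1$ I use $S_i = \{0\}$, under which each $c_j$ agrees at exactly the $k-1$ coordinates $i \ne j$. For the parity coordinates $k+p$, I use a round-robin scheme: partition $[n-k]$ into $m$ groups $G_1,\dots,G_m$ of nearly equal size, and for $p \in G_t$ let $S_{k+p}$ be the set of row-$p$ entries of $B$ at the $\ell$ consecutive column indices $t, t+1, \dots, t+\ell-1$ (taken cyclically modulo $m$, and padded arbitrarily up to size $\ell$ if there are coincidental repetitions).

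The key combinatorial observation is that each column $j \in [m]$ lies in $\ell$ of the $m$ cyclic windows $\{t,t+1,\dots,t+\ell-1\} \bmod m$ (assuming $\ell \le m$; otherwise the construction covers every parity coordinate and the bound is immediate), so $c_j$ is guaranteed to satisfy $c_j[k+p] = B_{p,j} \in S_{k+p}$ for every $p$ in the union of those $\ell$ groups, yielding at least $\ell \lfloor (n-k)/m \rfloor$ parity agreements. Since $m < (1-R)\ell/\eps$, the quantity $\ell(n-k)/m$ strictly exceeds $\eps n$ with slack $\bigl((1-R)\ell/m - \eps\bigr)n > 0$, and for $n \ge n_0(\ell, R, \eps)$ this linear slack absorbs the $O(\ell)$ losses from flooring and from the one missing top coordinate in the $\ell = 1$ case, so the total agreement per codeword reaches $(R+\eps)n$ and $c_1,\dots,c_m$ all lie in the resulting $(1-R-\eps)$-radius $\ell$-list-recovery ball. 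The main technical subtlety is ensuring this linear slack dominates the constant losses, which is precisely what the hypothesis that $n$ is sufficiently large provides (the threshold depending on the fractional part of $(1-R)\ell/\eps$, which governs how close $m$ can come to $(1-R)\ell/\eps$).
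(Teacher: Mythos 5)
Your proposal is correct and follows essentially the same route as the paper's proof: put the generator matrix in systematic form, take the first $m=\ceil{(1-R)\ell/\eps}-1$ basis codewords, cover the systematic coordinates with lists containing $\{0,1\}$, and distribute the parity coordinates evenly (your explicit cyclic-window scheme is just a concrete instantiation of the paper's ``evenly distributed'' lists), giving each codeword at least $\ell\floor{(n-k)/m}>\eps n$ parity agreements since $m<(1-R)\ell/\eps$. Your treatment is in fact slightly more careful than the paper's, e.g.\ in handling $\ell=1$ and the flooring losses explicitly.
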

\begin{proof}
    Upon writing the generator matrix of $\calC$ in the same form as described above in the proof of Theorem~\ref{thm:lb-1}, consider the first $m \coloneqq \ceil{(1-R)\ell/\eps}-1 < (1-R)\ell/\eps$ columns of the generator matrix.
    Denote these linearly independent column vectors by $v_1,\ldots,v_m$.
    Create input lists $S_1,\dots,S_k$ each of size $\ell$ that contain $0$ and $1$, but are otherwise arbitrary.
    Then create lists $S_{k+1},\dots,S_n$ of size $\ell$, each containing elements that are evenly distributed so that they agree equally with each of $v_1,\dots,v_m$.
    Thus, each of $v_1,\dots,v_m$ agrees with $S_1,\dots,S_n$ on the first $k$, and on at least $\floor{\frac{\ell}{m} \cdot (n-k)} >\eps n$ of the remaining coordinates. Therefore, these vectors lie inside a $(1-R-\eps)n$-radius $\ell$-list-recovery ball around $S_1,\dots,S_n$, as desired.
\end{proof}

\section{Concluding remarks}

We showed that random linear codes and Reed--Solomon codes are list-recoverable to capacity with near-optimal output-list size. Several open questions remain.
\begin{enumerate}
    \item What is the optimal output-list size for random linear codes and Reed--Solomon codes? There is a gap between our upper bound of $(\frac{\ell}{\varepsilon})^{O(\ell/\varepsilon)}$ and the lower bound of $\ell^{\Omega(1/\varepsilon)}$. We surmise that the correct answer is closer to the lower bound.

    \item As asked by Doron and Wootters \cite{doron2020highprobability}, are there \emph{explicit} list-recoverable codes with output list size $L=O_\varepsilon(\ell)$? (and, even better, over alphabet size $q=\poly(\ell)$). We showed (Theorem~\ref{thm:lb}) that any such code must be nonlinear.
    
    \item Our alphabet size for list-recovering Reed--Solomon codes (Theorem~\ref{thm:main-2}) is optimal in that it is linear in $n$, but the constant is double-exponential in $\ell/\varepsilon$. By contrast, for list-decoding, the best known alphabet size for achieving capacity has an exponential-type constant, $2^{\poly (1/\varepsilon)} \cdot n$ \cite{AGL24}. Can our alphabet size be improved?
\end{enumerate}

\section{Acknowledgments}
The authors would like to thank Yeyuan Chen and Zihan Zhang for pointing out a mistake in Theorem~\ref{thm:lb-1} in an earlier version of the paper.
\clearpage
\printbibliography

\end{document}